\documentclass[copyright,creativecommons]{eptcs}

\providecommand{\event}{LTT 2026}

\usepackage{iftex}
\ifpdf
  \usepackage{underscore}         
  \usepackage[T1]{fontenc}        
\else
  \usepackage{breakurl}           
\fi

\usepackage{graphicx}
\usepackage{ebproof}
\usepackage{amssymb}
\usepackage{amsthm}
\usepackage{amsmath}
\usepackage{hyperref}
\usepackage{cleveref}
\usepackage{cite}
\usepackage{wrapfig}
\usepackage{MnSymbol}
\usepackage{array}\newcounter{rowVG}

\usepackage{tikz}\usepackage{tikzpagenodes}
\usetikzlibrary{arrows,shapes.geometric,shapes.misc,positioning,automata}
\usetikzlibrary{trees,decorations.pathmorphing,calc}
\usetikzlibrary{intersections}

\usepackage[most]{tcolorbox}
\tcbset{on line, 
        boxsep=1pt, left=1pt,right=0pt,top=0pt,bottom=0pt,
        colframe=white,colback=gray!30,  
        highlight math style={enhanced}
      }
\colorlet{lastCLR}{yellow} 
\colorlet{occursCLR}{lime} 
\colorlet{intervalCLR}{green} 
\colorlet{NOTlastCLR}{teal} 
\colorlet{ANDlastCLR}{magenta!80} 

 \usepackage{listings}
\newcommand{\clrBACK}[1]{\tikz[remember picture,overlay]{\draw[#1,line width=10.99pt,opacity=0.3] let \p1 = (current page text area.west) in (\x1,.11) -- (0,.11);}}
\newcommand{\clrLINE}[1]{\tikz[remember picture,overlay]{ }}

\newcommand{\FDadd}[1]{#1}

\newcommand{\FDaddCR}[1]{#1}
\usepackage{todonotes}
\usepackage[normalem]{ulem}

\newtheorem{thm}{Theorem}[section]
\newtheorem{lemma}[thm]{Lemma}

\newtheorem{fact}[thm]{Fact}
\theoremstyle{definition}
\newtheorem{defn}{Definition}[section]

\newtheorem{exmp}[defn]{Example}
 \newtheorem{remark}[defn]{Remark}

\newtheorem{notation}[defn]{Notation}

\newcommand{\NULL}{\ensuremath{\mathtt{NULL}}}
\newcommand{\MALLOC}{\ensuremath{\mathtt{MALLOC}}}
\newcommand{\MFREE}{\ensuremath{\mathtt{MFREE}}}

\newcommand{\FT}{\ifmmode\textsf{FT} \else \textsf{AF}\ \fi}

\newcommand{\CB}{\texttt{CB}}
\newcommand{\CK}{\ifmmode\textsc{at} \else \textsc{at}\ \fi}

\newcommand{\typeof}{\ensuremath{\mathsf{type}}}
\newcommand{\lenghtof}{\ensuremath{\#}}
\newcommand{\maxtype}{\ensuremath{\mathsf{max}_\le}}
\newcommand{\prototypeof}{\ensuremath{\mathsf{prototype}}}
\newcommand{\domof}{\ensuremath{\mathsf{dom}}}
\newcommand{\extDom}{\mathop{\mathsf{dom}\mkern-21mu\overset{int}{\phantom{=}}\mkern9mu}}

 \newcommand{\uop}{\ifmmode\textsf{uop}\else\textsf{uop}\ \fi}
 \newcommand{\bop}{\ifmmode\textsf{bop}\else\textsf{bop}\ \fi}

\newcommand{\deriv}{\vdash}

\title{A Core Calculus for Type-safe Product Lines of C Programs}

\author{Ferruccio Damiani
\institute{
Department of Computer Science (DI) 
\\
University of Turin, Turin, Italy}
\email{ferruccio.damiani@unito.it}
\and
Daisuke Kimura
\institute{
Toho University
\\
Chiba, Japan}
\email{kmr@is.sci.toho-u.ac.jp}
\and
Luca Paolini
\institute{
Department of Computer Science (DI) 
\\
University of Turin, Turin, Italy}
\email{luca.paolini@unito.it}
\and
Makoto Tatsuta
\institute{
National Institute of Informatics 
\\
Tokyo, Japan}
\email{tatsuta@nii.ac.jp}
}

\def\titlerunning{A Core Calculus for Type Safe Product Lines of C Programs}
\def\authorrunning{F. Damiani et al.}
\begin{document}
\maketitle

\begin{abstract}
  In this paper we: (1) propose Lightweight C (LC), namely a core calculus that formalizes a proper subset of the ANSI C without preprocessor directives; (2)   define Colored LC (CLC),
 namely LC endowed with ANSI C preprocessor directives; and (3) define a 
  type system for CLC that guarantees that all programs  to be generated by the C preprocessor  are well-typed C programs.
We believe that the simple formalization provided by CLC could be useful also for teaching purposes.

Stefano Berardi spent most of his academic career at the Department of Computer Science of the University of Turin, where he  conducts outstanding research on the logical foundations of computer science and on type-based program analyses. Over the years, he taught many courses, \FDaddCR{from BSc courses on programming with C
to PhD courses on program analysis.} Therefore, this paper fully falls within Stefano Berardi's  research and teaching activities.
\end{abstract}

\begin{flushright}
    \footnotesize To Stefano Berardi on the occasion of his $2^6$th birthday.
\end{flushright}

\section{Introduction}\label{sec:introduction}
A \emph{Software Product Line (SPL)}
is a family of similar programs, called \emph{variants},  with  well-documented commonalities and differences, generated
from a common code base~\cite{Clements:2001,Pohl:2005,DBLP:books/daglib/0032924}. 
Each variant of an SPL can be identified by a set of \emph{features}, called a \emph{product}, where a feature is a name associated to a program functionality.

In the annotative approach to SPL implementation, some code fragments of the code base are associated with propositional formulas over features. Then, the
variant associated to a given product is generated by removing from the code base the code fragments that are annotated by a propositional formula that evaluates to false when the features in the product assume value true and the remaining features assume value false. The \#define and \#if directives of the C preprocessor \cite{book1988ansiC} are widely used to write SPLs of C programs according to the annotative approach. 

\paragraph*{Contribution of the Paper}
The contribution of this paper is twofold:
\begin{enumerate}
\item
we propose \emph{Lightweight C (LC)}, a core calculus for C programs which models syntax and  typing 
of a subset of C; and
\item
we propose \emph{Colored LC (CLC)}, a core calculus for annotation-based SPLs of LC programs together with a family-based 
type checking  system which guarantees that 
all the variants of a well-typed CLC SPL are well typed LC programs.
\end{enumerate}
Although toolchains including a type checker for C code with preprocessor directives are available in the \FDaddCR{literature~\cite{typechef:FOSD-2010,KOE:OOPSLA12},}
we are not aware of any formalization of type checking of SPLs for a core calculus that is a proper subset of C. 
We believe that the simple formalization provided by CLC could be useful also for teaching purposes.

Stefano Berardi spent most of his academic career at the Department of Computer Science of the University of Turin, where he  conducts outstanding research on the logical foundations of computer science and on type-based program analyses, inspiring and motivating students and young researchers (including -- in  chronological order -- the first author and the third author of this paper) and collaborated with several researchers (including  -- in reverse chronological order -- the second, the fourth and the first author of this paper). Over the years, he taught many courses, \FDaddCR{from BSc courses on programming with C
to PhD courses on program analysis.} Therefore, this paper fully falls within Stefano Berardi's  research and teaching activities.

\paragraph*{Organization of the Paper} \Cref{sec:background} introduces some background and discusses some related work on SPLs.  \Cref{sec:running-example} illustrates the annotative approach to the implementation of SPLs of C programs by an example, that will be used as a running example through the paper; \Cref{sec:language} presents the syntax and the typing of LC; \Cref{sec:SPL-language} presents CLC and a family-based type system for CLC SPLs;
\Cref{sec:SPL-type-safety} shows that 
the 
type checking system guarantees that all 
 all the variants of a well-typed CLC SPL are well typed LC programs; and
 \Cref{sec:conclusion} outlines some possible directions for future work.


\section{Background and Related Work}\label{sec:background}
\subsection{Background}

An SPL can be structured into: (i) a
 \emph{feature model (FM)} describing the variants in
terms of \emph{features}, where each feature is a name representing  an abstract description of functionality and each variant is identified by a set of features, called a \emph{product}; 
(ii) a    \emph{code base (CB)}  comprising language dependent 
reusable code artifacts  that are used to build the variants; and  (iii) \emph{configuration knowledge (CK)}  connecting  feature model and  artifact base by specifying
how, given a product,  the corresponding  variant can be derived from the code artifacts---thus inducing a mapping from products to variants, called the \emph{variant generator (VG)} of the SPL.

Several approach to representation of FM{s} 
and to the implementation of  SPLs (see, e.g.,~\cite{Batory:2005,Schaefer-EtAl:STTT-2012})  have been proposed in the literature. In this paper we consider the \emph{propositional representation} of FM{s} and  the \emph{annotative approach} to SPL implementation. 

\begin{defn}[FM in propositional representation]\label{def:FM}
An FM $\Phi$
is  a pair $(\mathcal{F},\phi)$ 
where
 $\mathcal{F}$ is a (finite) 
 set of features and
 $\phi$ is a
propositional formula over $\mathcal{F}$ (i.e., whose variables 
$F$ are elements of $\mathcal{F}$), with the following syntax:
\[
\phi \;\, ::= \;\, \lstinline{F} \; \mid \;\,  !\,\phi \; \mid \; \phi\,\&\&\,\phi  \; \mid \; \phi\,|| \phi\,\; \mid \; 0 \; \mid \; 1
\]

The products of an FM are the set of features
$p\subseteq\mathcal{F}$ such that $\phi$ is
satisfied by assigning value 1 to the variables $F$ in $p$
and 0  to the variables in $\mathcal{F}\setminus p$.
\end{defn}

\begin{remark}[On the syntax used for the proposition formulas]\label{rem:LogicalOperators}
In \Cref{def:FM} we have introduced  propositional formulas over features by using logical operators and constants in the syntax of the C programming language, which is also the
 syntax used in the conditions in the \lstinline?#if? preprocessor directives. Namely: $\&\&$ for conjunction, 
  $||$ for disjunction, $!$ for negation, 0 for false, and 1 for true. 
\end{remark}

\begin{notation}[A couple of abbreviation for propositional formulas]\label{not:Rightarrow-Leftrightarrow}
Sometimes, to improved readability, we will use the following abbreviations:
  \begin{itemize}
  \item
  $\phi_1\Rightarrow \phi_2$ as short for $!\phi_1\,||\, \phi_2$; and
    \item
  $\phi_1\Leftrightarrow \phi_2$ as short for $(\phi_1 \Rightarrow \phi_2) \,\&\&\, (\phi_2\Rightarrow \phi_1)$.
  \end{itemize}
\end{notation}

According to the annotative approach to SPL implementation:
\begin{itemize}
\item
 the CB{}  is a single program (possibly divided into different files) that contains (at least) all the code that occurs in any variant;
 \item
 the CK{}  is a mapping that annotates fragments of code in the CB with propositional expressions over features; and
 \item
 the VG takes a product $p$ and removes from the CB all the fragments that are annotated with a feature expression that evaluates to 0 when assigning value 1 to the variables $F$ in $p$
and 0  to the variables in $\mathcal{F}\setminus p$.
\end{itemize}

The pre-processor of C program language, with its directives \lstinline?#define? and \lstinline?#if?, is a prominent incarnation of the annotative approach to SPL implementation.
Many well-know software distributions  \cite{freshmeat}  
(e.g., 
apache, 
berkeley db,
linux, 
php, 
python,
sqlite,
vim,
etc.)
 can be understood as annotative implementations of an SPLs of C programs~\cite{liebig-et-al:ICSE-2010}. These, SPLs often have a huge number of products.
 For instance, the Gentoo~\cite{Gentoo} source-based Linux
distribution consists of many highly-configurable packages;
the March 1st 2019 version of the Gentoo distribution comprises
671617 features spread across 36197 feature models, and  thus up to $2^{671617}$ products~\cite{Lienhardt-e-al:ICSE-2020}. A number of questions, like the following one, naturally arise:
\begin{center}
\emph{How can we ensure that the (up to)  $2^{671617}$ variants are well-typed C programs?}
\end{center}
Clearly, even for SPLs with fewer features (e.g., one FM with 64 features, thus up to $2^{64}$ products) generating and type checking each variant in isolation is unfeasible. 

Several approaches to analysis of SPLs have been proposed in the literature (surveyed, e.g., in~\cite{Thum-EtAl:ACM-CS-2014}). Among these approaches, \emph{family-based approaches} can be characterized as follows (cf.\ Definition 4.1 in \cite{Thum-EtAl:ACM-CS-2014}):
\begin{itemize}
\item[]
An analysis of an SPL is family based if it: (a) operates only on the CB; and (b) incorporates the knowledge
in the FM and CK.
\end{itemize}
The key idea of family-based approaches  is to analyze the code base and the feature model together, to check whether
some intended properties hold for all variants, aiming at the feasibility of implementing the analyses.

\subsection{Related Work}\label{sec:related}

We focus this discussion of related research on two  aspects of our work: 
core languages for C  and core calculi for the annotative approach to SPL implementation. 

\paragraph{On Core Languages for C}  Albeit endowed with few keywords, C is a rather subtle language that few programmers know in detail because their wide room for undefined behavior \cite{hathhorn2015acm}.
   A complete formalization of the semantics for the \textbf{C11} language has been provided  by Chucky Ellison in his Ph.D. thesis \cite{ellison12phd} (see also~\cite{ellison2012acm});
  the thesis surveys many attempts to provide a formal semantics  for~C. 
  
  A number of core languages for C have been proposed in literature, e.g.: Featherweight C~\cite{FeatherweightC:2012}, a  ``simple C-like language''~\cite{jeehoon2015acm}, and Clight~\cite{blazy2009jar}.
We designed LC aiming for a calculus that has a relationship to C somewhat similar to that of lambda-calculus~\cite{Church:1932} (see also~\cite{Barendregt:1985}) or PCF~\cite{Plotkin:1977} with programming languages such as ML
or Haskell, and that of Featherweight Java (FJ)~\cite{FJ:TOPLAS-2001} with Java. None of the core languages for C mentioned above satisfy this requirement: Featherweight C~\cite{FeatherweightC:2012} and the ``simple C-like language'' are on the one hand too small (they do not model structures) and on the other hand too large (they model functions' local variables and statements); while Clight is definitely too big.

It is worth observing that  there is a substantial difference between the relations: (1) of the lambda-calculus or PCF with ML
or Haskell;  and (2) of FJ with Java. Namely, the lambda-calculus and PCF are Turing complete, while FJ has not
beeind designed with the goal of being (and, to the best of our knowledge, it is not) Turing complete. 
LC is (assuming integers of arbitrary size) Turing-complete and it
goes (by including integers, \FDadd{imperative assignment of values to struct-members,} conditional-expressions, expression sequences,  operators, \FDadd{and dynamic memory allocation/deallocation on the heap)} a bit --- we believe not too much --- beyond the spirit of FJ, hence the name we choose.

\paragraph{On Core Calculi for the Annotative Approach to SPL Implementation} 
Although:
\begin{itemize}
\item
a formal account of family-based type checking of annotation-based SPLs of Java programs has been proposed  
by means of the minimal core calculus Colored Featherweight Java (CFJ)~\cite{kastner2021acm}, which formalizes family-based type checking of SPLs of 
(cast-free) Featherweight Java (FJ)~\cite{FJ:TOPLAS-2001} programs; and
\item
TyeChef,\footnote{\url{https://github.com/ckaestne/TypeChef}} an infrastructure which includes a type checker for SPLs of C programs that implements variability with the \lstinline?#define? and \lstinline?#ifdef? directives of the
C preprocessor, has been proposed~\cite{typechef:FOSD-2010}, formalized~\cite{KOE:OOPSLA12} on top of a core calculus for module systems~\cite{Cardelli:POPL-1997},  and used as basis for implementing a number of SPL analyses (see, e.g.,~\cite{LvKADL:ESECFSE13,Medeiros-et-al:GPCE-2013,vLJKA:TOSEM18});
\end{itemize}
we are not aware of any formalization of  family-based type checking for annotation-based SPLs on top of a core language that is a proper subset of C.
Our formalization of the annotative approach to the implementation of SPL of C programs is inspired by the formalization 
introduced in Kastner et al.'s  paper on CFJ~\cite{kastner2021acm}. 
We refer to that paper for a discussion of related work on core calculi for the annotative approach to SPL implementation, and to a couple of survey papers for  broader  discussions about SPL implementation approaches~\cite{Schaefer-EtAl:STTT-2012} and  analysis strategies for SPLs~\cite{Thum-EtAl:ACM-CS-2014}.
A key difference between CLC and CFJ is that LC models \FDaddCR{imperative variable (struct-member) assignment} and dynamic memory allocation/deallocation on the heap; therefore CLC provides a suitable basis for investigating family-based analyses of SPLs of C programs for properties like, e.g., heap-safety.


\section{Running Example}\label{sec:running-example}
As a running example we consider an SPL of FIFO queues of integers, implemented by  linked lists, that we call the \emph{Queue SPL}. 
All variants of the Queue SPL define a struct of name \lstinline?fifo?, which has a member \lstinline?first? of type \lstinline?struct node*?
containing the (pointer to the first node of the) list containing the elements of the queue (\lstinline?NULL? if the queue is empty), 
\FDadd{and some functions.}

The Queue SPL has  three features and six products (and, therefore, six variants). The features are: 
\begin{itemize}
\item
\lstinline?LAST?, specifying that the struct of name \lstinline?fifo? 
has a member \lstinline?last? of type \lstinline?struct node*?
containing the pointer to the last element of the list containing the elements of the queue (or \lstinline?NULL?, if the queue is empty); 
\item
\lstinline?OCCURS?, specifying that there is a function \lstinline?int occurs(struct node* list, int element)? that checks whether the given \lstinline?list? contains an occurrence of the given \lstinline?element?; and
\item
\lstinline?INTERVAL?, specifying that there is a function \lstinline?int occurs(struct node* list, int start, int end, int element)? that checks whether the given \lstinline?list? contains the given element in a position between the given positions \lstinline?start? and \lstinline?end? (the fist element of the list is position 0).
\end{itemize}
The six variants (C programs) of the Queue SPL are identified by the six products described by the FM 
$\Phi_{\mathit{queue}}=({\mathcal F}_{\mathit{queue}},\phi_{\mathit{queue}})$, where:
\begin{itemize}
\item
${\mathcal F}_{\mathit{queue}} = \{\lstinline?LAST?, \lstinline?OCCURS?, \lstinline?INTERVAL?\}$, and
\item
$\phi_{\mathit{queue}} \; = \; !\,\lstinline?INTERVAL? \; || \; \lstinline?OCCURS?$;
\end{itemize}
which are the combinations of the features in  ${\mathcal F}_{\mathit{queue}}$ that satisfy  the propositional formula 
$\phi_{\mathit{queue}}$. Namely, they are:
\begin{enumerate}
\item
$p_1 = \emptyset$,  the empty set of features, which identifies the variant in \Cref{fig:queue-var1};
\item
$p_2 = \{\lstinline?LAST?\}$, which identifies the variant in \Cref{fig:queue-var2};
\item
$p_3 = \{\lstinline?OCCURS?\}$, which identifies the variant obtained from the code in \Cref{fig:queue-var3-4} by replacing the comment with the code of Variant 1;
\item
$p_4 = \{\lstinline?LAST?, \lstinline?OCCURS?\}$, which identifies the variant obtained from the code in \Cref{fig:queue-var3-4} by replacing the comment with the code of Variant 2;
\item
$p_5 = \{\lstinline?OCCURS?, \lstinline?INTERVAL?\}$, which identifies the variant obtained from the code in \Cref{fig:queue-var5-6} by replacing the comment with the code of Variant 1; and
\item
$p_6 = \{\lstinline?LAST?, \lstinline?OCCURS?, \lstinline?INTERVAL?\}$, which identifies the variant obtained from the code in \Cref{fig:queue-var5-6} by replacing the comment with the code of Variant 2.
\end{enumerate}

\begin{figure}[t]
\begin{footnotesize}
\begin{lstlisting}
struct node {
  int data;
  struct node* next;
};
struct fifo {
   struct node* first;
};
int front(struct fifo* queue) {return queue->first->data;}
struct node* getlast(struct node* list)
 {return !(list->next) ? list : getlast(list->next);}  
int rear(struct fifo* list) {return getlast(queue->first)->data;}
int main() {return 0;} 
\end{lstlisting}
\end{footnotesize}
\vspace{-30pt}
\caption{Variant 1 (for product $p_1 = \emptyset$) of the Queue SPL.}\label{fig:queue-var1}
\end{figure}
\begin{figure}[t]
\begin{footnotesize}
\begin{lstlisting}
struct node {
    int data;
    struct node* next;  
};
struct fifo {
    struct node* first;
    struct node* last;
};
int front(struct fifo* queue) { return queue->first->data; }
int rear(struct fifo* queue) { return queue->last->data; }
int main() {return 0;}
\end{lstlisting}
\end{footnotesize}\vspace{-30pt}
\caption{Variant 2 (for product $p_2 = \{\lstinline?LAST?\}$) of the Queue SPL.}\label{fig:queue-var2}
\end{figure}
\begin{figure}[t]
\begin{footnotesize}
\begin{lstlisting}
// Insert the code of: Variant 1 to get Variant 3; and of Variant 2 to get Variant 4

int occurs(struct node* list, int element)
 {return !(list) && ((list->data == element)  || occurs(list->next,element));} 
\end{lstlisting}
\end{footnotesize}\vspace{-30pt}
\caption{Variants 3 (for $p_3 = \{\lstinline?OCCURS?\}$) and  4 (for $p_4 = \{\lstinline?LAST?, \lstinline?OCCURS?\}$) of the Queue SPL.}\label{fig:queue-var3-4}
\end{figure}
%
%
\begin{figure}[t]
\begin{footnotesize}
\begin{lstlisting}
// Insert the code of: Variant 1 to get Variant 5; and of Variant 2 to get Variant 6

 int occurs(struct node* list,
            int start, int end,  
            int element) 
 {return !(list) 
       && end >= 0 
       && ((start <= 0 && list->data == element) || occurs(list->next,start-1,end-1,element)) 
 ;}    
\end{lstlisting}
\end{footnotesize}\vspace{-30pt}
\caption{Variants 5 (for $p_5 = \{\lstinline?OCCURS?, \lstinline?INTERVAL?\}$) and  6 ($p_6 = \{\lstinline?LAST?, \lstinline?OCCURS?, \lstinline?INTERVAL?\}$) of the Queue SPL.}\label{fig:queue-var5-6}
\end{figure}


The code base of the Queue SPL is given in \Cref{fig:queue-spl}, where for each feature there is a \lstinline?#define? preprocessor directive, and the variant identified 
by a given product can be generated by  commenting out the  \lstinline?#define? preprocessor directives for the features that are not included in the product.
It is worth observing that code base contains some code (namely, the comma at line 26 in \Cref{fig:queue-spl}) that is not included in any variant; this is done to ensure that 
the code obtained by removing all preprocessor directives from the code base is a well-formed C program. This program, which is not a variant of the Queue SPL, is given in
\Cref{fig:queue-spl-stripped}.

\begin{figure}
\include{Queue-SPL}
\caption{Code base of the Queue SPL, where the code associated to the annotations \tcbox[colback=lastCLR!30]{LAST}, \tcbox[colback=occursCLR!30]{OCCURS}, \tcbox[colback=intervalCLR!30]{INTERVAL}, \tcbox[colback=NOTlastCLR!30]{!LAST}, and \tcbox[colback=ANDlastCLR!30]{LAST \&\& !LAST} (i.e., false) is highlighted with different colors.}\label{fig:queue-spl}
\end{figure}

\begin{figure}
\include{Queue-SPL-stripped}
\caption{C program obtained by removing the preprocessor directives from the Queue SPL code base, where the code associated to the annotations \tcbox[colback=lastCLR!30]{LAST}, \tcbox[colback=occursCLR!30]{OCCURS}, \tcbox[colback=intervalCLR!30]{INTERVAL}, \tcbox[colback=NOTlastCLR!30]{!LAST}, and \tcbox[colback=ANDlastCLR!30]{LAST \&\& !LAST} (i.e., false) is highlighted with different colors.}\label{fig:queue-spl-stripped}
\end{figure}

\begin{remark}[A couple of  annotative C SPL programming patterns]\label{rem:SPL-patterns}
The code in lines 22-30 of~\Cref{fig:queue-spl} is an instance of the following annotative C SPL programming pattern. 
\begin{description}
\item[Pattern 1]
To provide $n\ge 2$ 
alternative versions of an expression (to be included in the code of different variants),  \FDaddCR{proceed as follows. First, provide the sequence $(e_1,...,e_n)$ of the $n$ 
expressions. Then, annotate the comma between $e_j$ and $e_{j+1}$ $(1\le j\le n-1)$ with a propositional formula that evaluates always to 0 (e.g., by 0 itself). Finally, 
annotate each expression $e_i$ $(1\le i \le n)$ with a suitable propositional formula $\phi_i$ over features; i.e., the formulas   $\phi_i$ must be}
(modulo the FM) pairwise mutually exclusive and (modulo the FM) at least one of them must be always satisfied. 
\end{description}
Instead of annotating the comma in line 26 with ``0'', we have  annotated it with the formula ``LAST \&\& !LAST''. 
This makes the code in lines 22-30 also an instance of the following more general annotative C SPL programming pattern:
\begin{description}
\item[Pattern 2]
\FDaddCR{Consider a  sequence $(e_1,...,e_n)$ of  $n\ge 2$ 
expressions. Annotate each expression $e_i$ $(1\le i \le n)$  with a propositional formula $\phi_i$ over features
such that (modulo the FM) at least one of the $\phi_i$' is always satisfied.
Annotate  the $n-1$ commas separating the $n$ expressions in the sequence  $(e_1,...,e_n)$ with the $n-1$ formulas $\psi_i$ $(1\le i\le n-1)$ (respectively), where $\psi_i$ is  defined as: 
  $$\psi_i = \phi_i\;\&\&\;(||_{i<j<n} \phi_j) = \phi_i\;\&\&\;(\phi_{i+1}\;||\cdots||\; \phi_{n-1})\;.$$ 
This ensures that, whenever $m$ expressions $(m<n)$ of the $e_i$ $(1\leq i\leq n)$ are removed,  then $m$ commas are removed in such a way the 
remaining code forms a well-formed sequence of expressions.} 
\end{description}
\end{remark}


%
%
%
%
%
%
%
%


\section{Lightweight C}\label{sec:language}

In this section we  introduce  \emph{Lightweight C  (LC)}, a core calculus for C programs. 

\subsection{LC Syntax}

The  abstract syntax of LC is given in \Cref{fig:LC-syntax} where, following~\cite{FJ:TOPLAS-2001,pierce2002book}, we adopt the overline notation  $\overline{\cdot}$
to denote  (possibly empty) sequences. We use the unary function $\lenghtof(\cdot)$  do denote the length of a sequence.
In particular, ``$\overline{\lstinline!SD!}$'', ``$\overline{\lstinline!T!} \, \overline{\lstinline!m!};$'',  ``$\overline{\lstinline?T?} \, \overline{\lstinline?x?}$'',  
and ``$\overline{\lstinline!e!}$'' denote
a sequence of struct definitions ``$\lstinline!SD!_1\cdots\lstinline!SD!_n$'' for $n=\lenghtof(\overline{\lstinline!SD!})$,
 a sequence of  struct members declarations ``$\lstinline!T!_1\,\lstinline!m!_1;\ldots\lstinline!T!_n\,\lstinline!m!_n;$'' 
 for $n=\lenghtof(\overline{\lstinline!T!} \, \overline{\lstinline!m!};)$, 
 a sequence of  function formal parameter declarations ``$\lstinline!T!_1\, \lstinline!x!_1,\ldots,\lstinline!T!_n\,\lstinline!m!_n$'' 
 for $n=\lenghtof(\overline{\lstinline?T?} \, \overline{\lstinline?x?})$, and
 a sequence of expressions ``$\lstinline!e!_1,\ldots,\lstinline!e!_n$'' for $n=\lenghtof(\overline{\lstinline!e!})$, respectively. The empty sequence is denoted by ``$\bullet$'' and a non-empty sequence of expressions is denoted $\tilde{\lstinline!e!}$.
We assume that the elements of list of named entities (i.e., struct definitions, struct member declarations, function definitions, function formal parameter declarations) have different names.

 The expression ``\MALLOC(\lstinline?struct?\lstinline?s?)'' \FDadd{is} the abbreviation of 
``\lstinline?(struct?\lstinline?s?$\ast$\lstinline?) malloc (sizeof(struct?\lstinline?s?\lstinline?)))?''
where \lstinline?malloc? is the standard library function for memory allocation;\footnote{We might assume that every LC program implicitly begins with the directive that define the parametric macro: ``\lstinline?\#define ?\MALLOC\lstinline?(a) (a?$\ast$\lstinline?)malloc(sizeof(a))?''.} and \MFREE\ \FDadd{is}  a variant of the \lstinline?free? standard library function for memory allocation
 which has return type ``\lstinline?void?$\ast$'' (instead of ``\lstinline?void?'') and always returns \NULL. 
 
\begin{remark}[On the syntax of LC and the syntax of ANSI C]\label{rem:LC-syntax}
The syntax in \Cref{fig:LC-syntax} \FDadd{is} subset of 
 \FDadd{the syntax defined} in the C24 Standard. In particular, we note that in accord to Sect.6.5.18 (Comma Operator)  of  \cite{C24-standard}  non-empty parenthesized expression-sequences ($\widetilde{\lstinline!e!}$) can be supplied as a single argument in a function \FDadd{call; e.g., given a funtion with prototype ``$\lstinline!int f(int x);!$''  we can 
 call it as
 ``$\lstinline!f((2,4))!$'', which is equivalent to ``$\lstinline!f(4)!$''.}
 We remind that references to struct-names are allowed without requiring that their definitions be available before\footnote{This is possible because,
   in \cite[Sect.6.2.5]{C24-standard}  is explained that  ``All pointers to structure types shall have
   the same representation and alignment requirements as each other.'' while, on the other hand, different kinds of pointers may have distinct width and/or alignment properties.},
 thus allowing recursive and mutual recursive types definitions. 
\end{remark}
 
 \begin{figure}
  \begin{tabular}{lcl@{\qquad}l}
  \hline\\[-5pt]
   \lstinline!Prg! & ::=  & $\overline{ \lstinline!SD! }$ $ \overline{ \lstinline!FD!}$
    &  Program  \\
    \lstinline!SD! & ::=  &\lstinline?struct s?\{$\overline{ \lstinline!T! }\ \overline{\lstinline!m!};$\};    
    &  Struct definition  \\ 
    \lstinline!FD! & ::=  &\lstinline?T f(?$\overline{\lstinline?T?}\ \overline{\lstinline?x?}$\lstinline?) { return e; }?     &  Function definition  \\
   \lstinline!T! & ::=  &  \lstinline?int?  $\;\mid$ \lstinline?void?$\ast$ $\mid$ \lstinline?struct?\lstinline?s?$\ast$  & Type \\  
    \lstinline!e! & ::=  &   n $\;\mid \;\NULL\; \mid$ \lstinline?x?
                           $\mid$  \lstinline?f? ($\bar{\lstinline!e!}$) 
                           $\mid$ \lstinline? e->m ?                            
                           $\mid$ \lstinline?e->m=e?   
                           $\mid$ \lstinline!e?e:e!    
                           $\mid$ ($\tilde{\lstinline'e'}$)                                                 
                                    &    \\
 &  &    
                           $\mid$ \uop \lstinline?e?
                           $\mid$ \lstinline?e? \bop \lstinline? e?     
                           $\mid$  \MALLOC(\lstinline?struct?\lstinline?s?)
                           $\mid$  \MFREE(\lstinline?e?)
    &  Expression  \\
  \hline      
 \end{tabular} 
  \caption{LC abstract syntax, where \lstinline?s? ranged over struct names, \lstinline?m? over struct member names, \lstinline?f? over function names, \lstinline?x? over function formal parameter names, n over interger values, \texttt{uop} over unary operators and \texttt{bop} over binary operators.}
  \label{fig:LC-syntax}
\end{figure}

 \FDadd{A program} consists of a sequence of struct declarations followed by a sequence of function declarations which always contains a function  \lstinline?main?
 which takes no arguments and returns an integer.
 
\FDadd{To simplify the presentation,  in what follows we consider, in the obvious way,} a struct definition \lstinline!SD! as a mapping from each member name \lstinline!m!
to member declaration \lstinline!T m!, a sequence of struct definitions $\overline{\lstinline!SD!}$ as a mapping from each struct name \lstinline!s! to the struct definition \lstinline!SD!, a sequence of function definitions $\overline{\lstinline!FD!}$ as a mapping from a function name \lstinline!f! to function definition \lstinline!FD!, and a program $\lstinline!Prg!$ as the union of the mappings  $\overline{\lstinline!SD!}$ and $\overline{\lstinline!FD!}$.  

 We write ``\lstinline!Prg! \textsc{sane}'' to mean that the program $\lstinline!Prg!=\overline{\lstinline!SD!} \; \overline{\lstinline!FD!}$  satisfies the following sanity conditions:
(1)
for every struct name  \lstinline?s? appearing everywhere in \lstinline!Prg! we have  $\lstinline?s?\in\domof(\overline{\lstinline!SD!})$; 
(2)
for every function name  \lstinline?f? appearing everywhere in $\overline{\lstinline!FD!}$ we have  $\lstinline?f?\in\domof(\overline{\lstinline!FD!})$;
and
(3) $\overline{\lstinline!FD!}(\lstinline?main?) = \lstinline?int main(){return e;}?$, for some expression \lstinline?e?.


\begin{exmp}[Some LC programs]\label{exa:LC-programs}
The six programs illustrated in \Cref{fig:queue-var1,fig:queue-var2,fig:queue-var3-4,fig:queue-var5-6} are LC programs.
\end{exmp}

\subsection{LC Typing}





Unary and binary operators are listed in \Cref{fig:LC-operators}, together with their types and a brief description. Given a (unary or binary) operator \texttt{op}, we write 
\typeof(\texttt{op}) to denote its type.

\begin{figure}
  \begin{tabular}{l@{\quad}l@{\quad}l}
         \hline 
    \textbf{Name} & \textbf{Type} & \textbf{Description} \\
              \hline 
    \emph{Unary operators}  \\
       \hline
      \lstinline!-! & (\lstinline?int?)$\rightarrow$\lstinline?int?    &  unary minus operators  \\
       \lstinline?!?  & (\lstinline?T?)$\rightarrow$\lstinline?int?    &  returns whether its argument is different from 0 and \NULL\\
                \hline 
    \emph{Binary operators}  \\
       \hline         
\lstinline?+, -, *, /, %?  & (\lstinline?int?,\lstinline?int?)$\rightarrow$\lstinline?int?    &  arithmetic operators \\      
\lstinline?&&, ||?  & (\lstinline?int?,\lstinline?int?)$\rightarrow$\lstinline?int?    &  logic operators \\  
     \lstinline^<, <=, >, >=^ & (\lstinline?int?,\lstinline?int?)$\rightarrow$\lstinline?int?    & integer relational operators  \\      
  \lstinline^==, !=^ & (\lstinline!T!,\lstinline!T!)$\rightarrow$\lstinline?int?    &  equality and inequal operators  \\ 
  \hline
\end{tabular}
   \caption{LC operators.}
  \label{fig:LC-operators}
\end{figure}

Following the convention in~\cite{pierce2002book},  a \emph{type environment} $\Gamma$ is a finite mapping from function formal parameter names to types, written ``$\overline{\lstinline?x?}:\overline{\lstinline?T?}$'' which (as usual) is short for
``$\lstinline?x?_1:\lstinline?T?_1,...,\lstinline?x?_n:\lstinline?T?_n$'' with $n\ge 0$. In \Cref{fig:LC-typing}, we define three kinds of typing judgements: we write $\vdash \lstinline?Prg?  \; \textsc{ok}$ to denote that \lstinline?Prg? is well typed;  we write $\vdash \lstinline{FD} \; \textsc{ok}$ to denote that functions defined in  $\domof(\lstinline{FD})$ are well typed; and,  we write $\Gamma \vdash \lstinline{e}: \lstinline?T?$ to denote that the expression  \lstinline{e} has type \lstinline?T? under the assumptions in the environment $\Gamma$.
\FDadd{The}
 symbol $\le$ denotes the subtyping relation,
 which is the  reflexive closure of the relation 
 defined by 
\lstinline?void?$\ast$ $\le$  \lstinline?struct?\lstinline?s?$\ast$, for all struct \lstinline?s? defined in the program.
We write $\maxtype\{\lstinline?T?_1,\lstinline?T?_2\}$ to denote the greatest between types $\lstinline?T?_1$ and $\lstinline?T?_2$.

The typing rules are fairly standard. 
Following~\cite{pierce2002book}, in
the premises of rule \textsc{T-prg} we write ``$\vdash \overline{\lstinline?FD?}  \; \textsc{ok}$''
as short for ``$\vdash \lstinline?FD?_1  \; \textsc{ok}$ $\cdots$ $\vdash \lstinline?FD?_n  \; \textsc{ok}$'';
in the premises of rule
\textsc{T-app} we write ``$\Gamma \vdash \overline{\lstinline{e}} :  \overline{\lstinline?T?}\,$'' as short for 
``$\Gamma \vdash \lstinline{e}_1 :  \lstinline?T?_1$ $\cdots$ $\Gamma \vdash \lstinline{e}_n :  \lstinline?T?_n$'' and, similarly, 
for the  premise ``$\Gamma \vdash \tilde{\lstinline{e}} :  \tilde{\lstinline?T?}$'' of rule \textsc{T-seq}; while in premises of rule
\textsc{T-app} we write ``$\overline{\lstinline{T}}'\le\overline{\lstinline{T}}$'' as short for ``$\lstinline{T}'_1\le\lstinline{T}_1$ $\cdots$ $\lstinline{T}'_n\le\lstinline{T}_n$''.
We note that \textsc{T-malloc} requires $\lstinline?s?\in\domof(\overline{ \lstinline!SD!})$ albeit this is ensured by the sanity condition in the rule \textsc{T-prg}: this just simplifies some statetement about the system.

\begin{figure}
  \hrule
    \vspace{2pt}
  \textbf{Program typing} \hfill \fbox{$\vdash \lstinline?Prg?  \; \textsc{ok}$}
   \vspace{5pt}
  \\
  \begin{prooftree}[right label template=\tiny(\inserttext)]
     \hypo{ \lstinline!Prg! \; \textsc{sane} }
 \hypo{ \lstinline!Prg! = \overline{\lstinline?SD?}\;\overline{\lstinline?FD?} }
  \hypo{ \vdash \overline{\lstinline?FD?}  \; \textsc{ok}}
\infer3[\textsc{T-prg}]{ \vdash \lstinline!Prg! \; \textsc{ok} }
\end{prooftree}

\vspace{5pt}
\hrule
    \vspace{2pt}
  \textbf{Function typing} \hfill \fbox{$\vdash \lstinline{FD} \; \textsc{ok}$}
   \vspace{5pt}
  \\
  \begin{prooftree}[right label template=\tiny(\inserttext)]
  \hypo{ \overline{\lstinline?x?} : \overline{\lstinline?T?} \vdash \lstinline{e}: \lstinline{T}_1 }
    \hypo{ \lstinline?T?_1\le\lstinline?T?_0} 
\infer2[\textsc{T-fun}]{ \vdash \lstinline?T?_0 \; \lstinline?f(?\overline{\lstinline?T?} \; \overline{\lstinline?x?}\lstinline?) \{return e?\lstinline?;\}? \; \textsc{ok}}
\end{prooftree}

  \vspace{5pt}
\hrule
    \vspace{2pt}
  \textbf{Expression typing} \hfill \fbox{$\Gamma \vdash \lstinline{e}: \lstinline?T?$}
   \vspace{5pt}
  \\
\begin{prooftree}[right label template=\tiny\textsc{(\inserttext)}]
\infer0[T-int]{ \Gamma \vdash \text{n}: \lstinline{int} }
\end{prooftree}
\qquad
\begin{prooftree}[right label template=\tiny\textsc{(\inserttext)}]
\infer0[T-null]{ \Gamma \vdash \NULL: \lstinline{void}\ast }
\end{prooftree}
\qquad
\begin{prooftree}[right label template=\tiny\textsc{(\inserttext)}]
\hypo{ \lstinline?x?:\lstinline?T?\in\Gamma}
\infer1[T-par]{ \Gamma \vdash  \lstinline?x? :\lstinline?T? }
\end{prooftree}

\bigskip

\begin{prooftree}[right label template=\tiny\textsc{(\inserttext)}]
 \hypo{ \lstinline!Prg!(\lstinline?f?)=\lstinline?T?_0 \; \lstinline?f(?\overline{\lstinline?T?} \; \overline{\lstinline?x?})\{ \ldots\}  }
 \hypo{ \Gamma \vdash \overline{\lstinline{e}} :  \overline{\lstinline?T?}'}
   \hypo{ \overline{\lstinline?T?}'\le\overline{\lstinline?T?}} 
\infer3[T-app]{   \Gamma \vdash \lstinline?f? (\overline{ \lstinline!e!}) : \lstinline!T!_0   }
\end{prooftree}
\quad
\begin{prooftree}[right label template=\tiny\textsc{(\inserttext)}]
 \hypo{ \Gamma \vdash \lstinline{e}_0 :\lstinline{struct s}\ast}
 \hypo{ \lstinline!Prg!(\lstinline{s})(\lstinline{m}) =  \lstinline?T m? }
\infer2[T-member]{  \Gamma \vdash \lstinline{e}_0\lstinline{->m} : \lstinline?T? }
\end{prooftree}

\bigskip

\begin{prooftree}[right label template=\tiny\textsc{(\inserttext)}]
  \hypo{ \Gamma \vdash \lstinline{e}_0: \lstinline{struct s}\ast }
   \hypo{ \lstinline!Prg!(\lstinline{s})(\lstinline{m}) =  \lstinline?T m?}
   \hypo{ \Gamma \vdash \lstinline{e}_1:\lstinline?T?_1}
    \hypo{ \lstinline?T?_1\le\lstinline?T?} 
\infer4[T-assign]{ \Gamma \vdash \lstinline{e}_0\lstinline{->m}=\lstinline{e}_1 : \lstinline?T? }
\end{prooftree}

\bigskip

\begin{prooftree}[right label template=\tiny\textsc{(\inserttext)},center=false]
 \hypo{ \Gamma \vdash \lstinline{e}_0: \lstinline?T?_0 }
\hypo{ \Gamma \vdash \lstinline{e}_1:\lstinline?T?_1}
\hypo{ \Gamma \vdash \lstinline{e}_2:\lstinline?T?_2}
 \hypo{ \lstinline?T?_3=\maxtype\{\lstinline?T?_1,\lstinline?T?_2\} } 
\infer4[T-cond]{ \Gamma \vdash \lstinline!e!_0\lstinline!?e!_1\lstinline!:e!_2 : \lstinline!T!_3 }
\end{prooftree}
\bigskip

\begin{prooftree}[right label template=\tiny\textsc{(\inserttext)},center=false]
    \hypo{  \tilde{\lstinline!e!}=\lstinline!e!_1,\ldots,\lstinline!e!_n}
  \hypo{ \Gamma \vdash \lstinline{e}_1 :  \lstinline?T?_1 \quad\cdots\quad \Gamma \vdash \lstinline{e}_n :  \lstinline?T?_n }
    \hypo{ \lstinline?T? = \lstinline?T?_n}
\infer3[T-seq]{   \Gamma \vdash (\tilde{\lstinline{e}}) :  \lstinline?T? }
\end{prooftree}
\bigskip

\begin{prooftree}[right label template=\tiny\textsc{(\inserttext)},center=false]
 \hypo{ \Gamma \vdash \lstinline{e}_0 :\lstinline?T?_0}
 \infer[no rule]1{ \typeof(\texttt{uop})=(\lstinline?T?_0)\rightarrow \lstinline?int?  }
\infer1[T-uop]{   \Gamma \vdash \texttt{uop}(\lstinline!e!_0) :  \lstinline?int? }
\end{prooftree}
\qquad
\begin{prooftree}[right label template=\tiny\textsc{(\inserttext)},center=false]
    \hypo{  \Gamma \vdash \lstinline{e}_1  : \lstinline?T?_1}
    \infer[no rule]1{ \lstinline?T?_3=\maxtype\{\lstinline?T?_1,\lstinline?T?_2\}  }
    \hypo{  \Gamma \vdash \lstinline{e}_2 : \lstinline?T?_2}
     \infer[no rule]1{ \typeof(\texttt{bop})=(\lstinline?T?_3,\lstinline?T?_3)\rightarrow \lstinline?int?}
\infer2[T-bop]{ \Gamma \vdash \lstinline{e}_1 \;\texttt{bop}\; \lstinline{e}_2 : \lstinline?int? }
\end{prooftree}

\bigskip

\begin{prooftree}[right label template=\tiny\textsc{(\inserttext)},center=false]
  \hypo{\lstinline?s?\in\domof(\overline{ \lstinline!SD!})}
\infer1[T-malloc]{ \Gamma \vdash \MALLOC(\lstinline?struct?\lstinline?s?): \lstinline{struct s}\ast }
\end{prooftree}
\qquad
\begin{prooftree}[right label template=\tiny\textsc{(\inserttext)},center=false]
\lstinline?struct?\lstinline?s?
 \hypo{ \Gamma \vdash \lstinline{e}_0:\lstinline?struct?\lstinline?s?\ast}
\infer1[T-mfree]{ \Gamma \vdash \MFREE(\lstinline{e}_0): \lstinline{void}\ast }
\end{prooftree}

\vspace{5pt}
\hrule
   \caption{LC typing.}
  \label{fig:LC-typing}
\end{figure}

\begin{exmp}[Some well-typed LC programs]\label{exa:LC-programs-typing}
The six programs illustrated  in \Cref{fig:queue-var1,fig:queue-var2,fig:queue-var3-4,fig:queue-var5-6} are well-typed LC programs.
\end{exmp}

\begin{fact}[On the typing of LC and ANSI C]\label{rem:LC-typing}
Let \lstinline?FD = T f(?$\overline{\lstinline?T?}\ \overline{\lstinline?x?}$\lstinline?) {?$\cdots$\lstinline?}? be a function definition, let
\prototypeof(\lstinline?FD?)  denote  \lstinline?T f(?$\overline{\lstinline?T?}\ \overline{\lstinline?x?}$\lstinline?);? and, let
\prototypeof($\overline{\lstinline?FD?}$) denote \prototypeof(\lstinline?FD?$_1$)$\cdots$\prototypeof(\lstinline?FD?$_n$).\\
If $\overline{ \lstinline!SD! }  \; \overline{ \lstinline!FD!}$ is a \emph{well-typed} LC program, i.e., 
$\vdash \overline{ \lstinline!SD! }  \; \overline{ \lstinline!FD!} \; \textsc{ok}$ holds,
then
$\overline{ \lstinline!SD! }$ \prototypeof($\overline{\lstinline?FD?}$) $\overline{ \lstinline!FD!}$ 
 is a well-typed C24 program~\cite{C24-standard}.
\end{fact}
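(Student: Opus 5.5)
The plan is a structural induction on the LC typing derivation, showing that every LC typing judgement is matched by the corresponding C24 constraint.

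\emph{Prototypes and declaration order.} In C24 every identifier must be declared before use. The struct definitions $\overline{\lstinline!SD!}$ come first. Any forward or mutually recursive reference between structs occurs only under a pointer type \lstinline?struct s?$\ast$, and \cite[Sect.6.2.5]{C24-standard} permits such references to incomplete structure types (see \Cref{rem:LC-syntax}). Emitting \prototypeof($\overline{\lstinline?FD?}$) before $\overline{\lstinline!FD!}$ puts every function name in scope at file scope before any body is checked. By the sanity conditions (1) and (2) of \textsc{T-prg}, each struct name and each called function then refers to a declared entity, and \lstinline?main? has the required form by condition (3). Member names are distinct, as are function names and parameter names, so no redeclaration errors arise.

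\emph{Expressions.} Next I will prove, by induction on derivations, that if $\overline{\lstinline?x?}:\overline{\lstinline?T?}\vdash\lstinline?e?:\lstinline?T?$ holds, then \lstinline?e? is a valid C24 expression in the scope of those parameters. Its C type is \lstinline?T?, except that \NULL\ may have C type \lstinline?int? or \lstinline?void?$\ast$ (a null pointer constant); in either case it is assignable wherever \lstinline?void?$\ast$ is. The rule-by-rule correspondence is as follows:
\begin{itemize}
\item \textsc{T-int}, \textsc{T-par} and \textsc{T-null} are immediate.
\item \textsc{T-member} and \textsc{T-assign} match the \lstinline?->? constraints (6.5.2.3) and simple assignment (6.5.16.1). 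The assignment case uses the fact that $\lstinline?T?_1\le\lstinline?T?$ means either equality or \lstinline?void?$\ast$ into \lstinline?struct s?$\ast$, which C allows implicitly.
\item \textsc{T-app} matches 6.5.2.2 with a prototype in scope: arguments are converted as if by assignment, so the same subtyping argument applies. Parenthesized sequences passed as arguments are handled by 6.5.18, as noted in \Cref{rem:LC-syntax}.
\item \textsc{T-cond} matches 6.5.15. When the branch types differ, one of them is \lstinline?void?$\ast$ and the result type is the struct pointer (or, for a null pointer constant, the other type).
\item \textsc{T-seq} matches the comma operator, whose value has the type of its last operand.
\item \textsc{T-uop} and \textsc{T-bop} are checked against \Cref{fig:LC-operators}. \lstinline?!? accepts scalars; \lstinline?==? and \lstinline?!=? accept pointers of compatible type, or a pointer paired with \lstinline?void?$\ast$; the remaining operators accept arithmetic operands.
\item \textsc{T-malloc} is a cast of a \lstinline?void?$\ast$ result, which is fine provided \lstinline?<stdlib.h>? is assumed included. \textsc{T-mfree} relies on the intended declaration of \MFREE.
\end{itemize}

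\emph{Functions and programs.} \textsc{T-fun} then corresponds to the \lstinline?return? constraint (6.8.6.4), which converts the returned value as if by assignment. \textsc{T-prg} combines the per-function results.

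The main obstacle is the operator and conditional cases involving \NULL. Because \NULL\ is typed \lstinline?void?$\ast$ in LC but may be a plain integer constant \lstinline?0? in C, I must check that every position accepting \lstinline?void?$\ast$ also accepts a null pointer constant. The same check is needed for $\maxtype$ on mixed pointer types in \lstinline?==? and in conditionals. I would treat this with an auxiliary lemma: any LC expression of type \lstinline?void?$\ast$ is either a null pointer constant or has C type \lstinline?void?$\ast$. Each relevant constraint in C24 explicitly admits both forms.
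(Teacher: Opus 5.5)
The paper states this as a Fact and gives no proof, so the question is whether your induction closes. It does not. Your \textsc{T-cond} case misreads C24 6.5.15, and the resulting failure is a counterexample to the statement itself. When the branches have types \texttt{void*} and \texttt{struct s*}, C gives the conditional the type of the struct-pointer operand only if the \texttt{void*} branch is a null pointer constant. Otherwise, for instance when that branch is a \texttt{void*} parameter or an \MFREE\ call, the result has type \texttt{void*}. LC nevertheless assigns $\maxtype$, i.e.\ \texttt{struct s*}, so your invariant ``C type $=$ LC type up to \NULL'' breaks. The break is fatal at \textsc{T-member} and \textsc{T-assign}: 6.5.2.3 requires the left operand of \texttt{->} to be a pointer to a structure, and \texttt{void*} is not one. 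Your auxiliary lemma does not reach this, because the offending expressions have LC type \texttt{struct s*}, not \texttt{void*}. Concretely, \texttt{struct s \{ int m; \}; int g(void* y, struct s* z) \{ return (1 ? y : z)->m; \} int main() \{ return 0; \}} is a well-typed LC program (\textsc{T-cond} gives \texttt{struct s*}, then \textsc{T-member} applies), yet it violates a C24 constraint that a conforming compiler must diagnose. So the Fact as stated is false, and no induction can discharge the \texttt{->} cases. You must first change LC or the translation. For example, let \textsc{T-cond} return \texttt{struct s*} in the mixed case only when the \texttt{void*} branch is literally \NULL, or have the translation insert a cast \texttt{(struct s*)} around such conditionals. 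After that, carry the invariant ``C type $=$ LC type'' exactly.

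Your \NULL\ lemma also fails if \NULL\ expands to the integer constant \texttt{0}, which the standard permits. In that case \texttt{(x, NULL)}, and \texttt{c ? NULL : NULL} with \texttt{c} a parameter, have C type \texttt{int}. Neither is an integer constant expression, so neither is a null pointer constant, yet LC types both as \texttt{void*}. Passing such an expression to a parameter of type \texttt{struct s*}, or assigning it to a member of that type, is accepted by \textsc{T-app}/\textsc{T-assign} via \texttt{void*} $\le$ \texttt{struct s*}. It nevertheless violates the constraints of 6.5.2.2 and 6.5.16.1. Fixing \NULL\ as \texttt{((void*)0)} repairs this, since then every LC expression of type \texttt{void*} has C type \texttt{void*}. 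Finally, your declaration paragraph misses that LC admits an empty member list, while the C24 grammar requires a non-empty member-declaration list. So \texttt{struct s \{\};} must be excluded as well. Unless function and parameter names are assumed disjoint, a parameter named like a called function also shadows it and breaks \textsc{T-app}. Once these points are patched, your rule-by-rule induction is the right argument.
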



\section{Colored Lightweight C}\label{sec:SPL-language}
In this section we  introduce  \emph{Colored LC (CLC)}, a core calculus for SPLs of (L)C programs.
\FDadd{To represent the annotations, we adopt} the approach proposed for Colored Featherweight Java (CFJ)~\cite{kastner2021acm}.

\subsection{CLC Syntax}


The C preprocessor  \FDadd{allows to remove arbitrary} code fragments \FDadd{(even fragments starting in the body of a function and ending in the body of another function),} making the process prone to syntactical errors. 
\FDadd{As in CFJ~\cite{kastner2021acm}, in 
 CLC, we adopt a neat discipline on what can be removed. Namely, we allow}
 programmers to remove some of the \tcbox{highlighted in gray} elements  in Figure~\ref{fig:CLC-syntax}:  
any struct definition, 
any member of a struct, any function definition, any formal parameter of a function definition, any actual parameter of a function application, and any expression in an expression sequence.
It is worth to note that in the concrete syntax the list of expressions include commas that can be annotated, but in the abstract one we neglect their presence:
  they can be faced in accord to Pattern 2 of (cf. Remark~\ref{rem:SPL-patterns}) and they can be managed uniformly as list-elements.

 \begin{figure}
  \begin{tabular}{lcl@{\qquad}l}
  \hline\\[-5pt]
   \lstinline!Prg! & ::=  & \tcbox{$\overline{ \lstinline!SD! } $} \tcbox{$ \overline{ \lstinline!FD!}$}
    &  Program  \\
    \lstinline!SD! & ::=  &\lstinline?struct s?\{\tcbox{$\overline{ \lstinline!T! }\ \overline{\lstinline!m!};$}\};    
    &  Struct definition  \\ 
    \lstinline!FD! & ::=  &\lstinline?T f(?\tcbox{$\overline{\lstinline?T?}\ \overline{\lstinline?x?}$}\lstinline?) { return e; }?     &  Function definition  \\
   \lstinline!T! & ::=  &  \lstinline?int?  $\;\mid$ \lstinline?void?$\ast$ $\mid$ \lstinline?struct?\lstinline?s?$\ast$  & Type \\  
    \lstinline!e! & ::=  &   n $\;\mid \;\NULL\; \mid$ \lstinline?x?                        
                           $\mid$ \lstinline?f? (\tcbox{$\bar{\lstinline!e!}$}) 
                           $\mid$ \lstinline? e->m ? 
                           $\mid$ \lstinline?e->m=e?                              
                           $\mid$ \lstinline!e?e:e!    
                           $\mid$ (\tcbox{$\tilde{\lstinline'e'}$})                             
                                    &    \\
 &  &    
                           $\mid$ \uop \lstinline?e?      
                           $\mid$ \lstinline?e? \bop \lstinline?e?
                           $\mid$  \MALLOC(\lstinline?struct?\lstinline?s?)
                           $\mid$  \MFREE(\lstinline?e?)
    &  Expression  \\
  \hline      
 \end{tabular} 
  \caption{LC syntax (the same as in Figure~\ref{fig:LC-syntax}) where the parts \tcbox{highlighted in gray} are those that can be annotated.}
  \label{fig:CLC-syntax}
\end{figure}

We model code annotations in the CB by using an \emph{annotation table}, that maps each occurrence of an annotable code fragment (identified by its location, e.g.,
in the text or in the syntax tree, of the CB)\footnote{So, distinct occurrences of a same code fragment can have different annotations.} to a propositional formula over features specifying whether 
the occurrence of the code fragment must be kept when generating  variants. Therefore a CLC program, that represents an SPL of LC programs,  is a triple $(\Phi,\lstinline{Prg},\CK)$ where: 
\begin{itemize}
\item
$\Phi$ is an FM  (see \Cref{def:FM}),
\item
\lstinline{Prg} is an LC program, which represents the CB, and
\item
\CK is an annotation table, which expresses CK.
\end{itemize}
To simplify the
formalization, in what follows we always assume a fixed SPL $(\Phi,\lstinline{Prg},\CK)$.

We use the metavariables $a$ and $b$ to refer to arbitrary annotable code fragments.
For every annotable fragment $a$ of the CB and for every product $p$ of the FM, the propositional formula $\CK(a)$ is such that if $p\models \CK (a)$  (where $p$ is used as the interpretation which maps to $1$ the features in $p$ and maps to $0$ the other features) then the code fragment $a$ must be kept when generating the variant for $p$, otherwise it  must be removed. The fragments without annotation are implicitly assumed to be annotated by the constant \lstinline{1} (i.e., to be present in all variants).

\begin{remark}[On the CLC annotation table and the \#if  preprocessor directives]\label{rem:CLC-AT}
  It is straightforward to see that the CK information specified by the annotation table can be implemented by the \lstinline?#if?  preprocessor directives (cf.\ \Cref{sec:running-example}).
\end{remark}

\begin{remark}[Syntactic sugar for the CLC annotation table]\label{rem:CLC-AT-sugar} 
In the examples, we assume that it is also possible to annotate any occurrence of a binary logic operator together with either its left or its right argument; i.e., each occurrence of an expression ``\lstinline?e?$_1$ \bop\ \lstinline?e?$_2$'', where \bop\ is a binary logic operator, may be  either \emph{left-annotated} as  
``\tcbox{\lstinline?e?$_1$ \bop} \lstinline?e?$_2$'' or \emph{right-annotated} as ``\lstinline?e?$_1\!$ \tcbox{\bop\ \lstinline?e?$_2$}''. It is worth observing that this kind of annotations:
\begin{itemize}
\item
can be expressed by using the \lstinline?#if?  preprocessor directives, as done in the code in lines 40-42 and 44-46 in \Cref{fig:queue-spl}; and
\item
can be formalized by introducing some complexity in the definition of the annotation table. 
\end{itemize}
To keep the presentation simpler, we have preferred  not to formalize left- and right-annotations of binary logic operators  and to consider:
\begin{itemize}
\item
 ``$\lstinline?e?_1\ \bop\ \lstinline?e?_2$'' with $\CK(\lstinline?e?_1\ \bop)=\psi$ as syntactic sugar for the expression sequence ``$(\lstinline?e?_1\ \bop\ \lstinline?e?_2, \lstinline?e?_2)$''
 with $\CK(\lstinline?e?_1\ \bop\ \lstinline?e?_2)=\psi$ and $\CK(\lstinline?e?_2)=!\psi$; and
 \item
  ``$\lstinline?e?_1\ \bop\ \lstinline?e?_2$'' with $\CK(\bop\ \lstinline?e?_2)=\psi$ as syntactic sugar for the expression sequence ``$(\lstinline?e?_1\ \bop\ \lstinline?e?_2, \lstinline?e?_1)$''
 with $\CK(\lstinline?e?_1\ \bop\ \lstinline?e?_2)=\psi$ and $\CK(\lstinline?e?_2)=!\psi$.
\end{itemize}
\end{remark}

\begin{exmp}[The Queue SPL]\label{exa:CLC-QueueSPL}
Let $(\Phi_{\mathit{queue}},\lstinline{Prg}_{\mathit{queue}},\CK_{\mathit{queue}})$ be the  CLC SPL such that:
\begin{itemize}
\item
$\Phi_{\mathit{queue}}$ is the FM defined at the beginning of~\Cref{sec:running-example}; 
\item
$\lstinline{Prg}_{\mathit{queue}}$ is the LC program  in \Cref{fig:queue-spl-stripped} -- ignore the colors; and
\item
$\CK_{\mathit{queue}}$ is the annotation table described by the colors in \Cref{fig:queue-spl-stripped}, where:  line 23 and line 25 
contains annotations expressed by using the syntactic sugar described in Remark~\ref{rem:CLC-AT-sugar}; and   line 27 contains the two annotations 
$\CK_{\mathit{queue}}(\lstinline?start$\!\!\!-1$?)=\lstinline?INTERVAL?$ and $\CK_{\mathit{queue}}(\lstinline?end-1?)=\lstinline?INTERVAL?$.
\end{itemize}
The ANSI C encoding of the pair $(\lstinline{Prg}_{\mathit{queue}},\CK_{\mathit{queue}})$ is given  in \Cref{fig:queue-spl}, 
where the two annotations in line 27 of \Cref{fig:queue-spl-stripped} ($\CK_{\mathit{queue}}(\lstinline?start-1?)=\lstinline?INTERVAL?$ and $\CK_{\mathit{queue}}(\lstinline?end-1?)=\lstinline?INTERVAL?$) are expressed by a single \#ifdef preprocessor directive in lines 48-50 of \Cref{fig:queue-spl}.
\end{exmp}

 \begin{notation}[Some abbreviation for propositional formulas involving the annotation table]\label{not:AT-formulas} $\;$
\begin{itemize}
\item
We write ``$\CK(\bar{a})\odot \CK(\bar{b})$'', with $\odot\in \{\Rightarrow,\Leftrightarrow\}$ (see Notation \ref{not:Rightarrow-Leftrightarrow}), as short for 
\begin{center}
``$(\CK(a_1)\odot \CK(b_1)) \,\&\& \,\cdots \,\&\&\, (\CK(a_n)\odot \CK(b_n))$'' 
\end{center}
thus also stating that the sequences $\bar{a}$ and $\bar{b}$ have the same length $n$.
\item
\FDadd{We write} ``$\CK(\lstinline{T})$'' as short for
\begin{itemize}
\item
 ``1'', if $\lstinline{T}\in\{\lstinline{int}, \lstinline{void}\ast\}$, which states that types $\lstinline{int}$ and $\lstinline{void}\ast$
are always present;
and 
\item
``$\CK(\lstinline{Prg(s))}$'', if
 $\lstinline{T}=\lstinline{struct} \lstinline{s}\ast$, which states  that
 the struct  \lstinline{s} is present. 
  \end{itemize}
\item
We write
``$\exists(\tilde{\lstinline!e!})$'' as short for  
``$\CK(\lstinline!e!_1) \,||\, \cdots \,||\, \CK(\lstinline!e!_n)$'',  which states that there is at least one $\lstinline!e!_i$ in $\tilde{\lstinline!e!}$  such that  $\CK(\lstinline!e!_i)$ is true;
i.e., that 
there is at least an  expression of  $\tilde{\lstinline!e!}$ that is not removed in the generation of some variant.
\item 
We write
``$\texttt{neverLast}(k,\bar{\lstinline!e!})$'' as short for ``$!\CK(\lstinline!e!_k)  \,||\, \CK(\lstinline!e!_{k+1}) \,||\, \ldots \,||\, \CK(\lstinline!e!_n)$, which states that 
in every variant $\lstinline!e!_k$ is never the last (not suppressed) expression of $\bar{\lstinline!e!}$, viz., if it is not removed in the generation of a variant then at least one subsequent espression is not removed  in the generation of that variant.
 \end{itemize}
\end{notation}


\subsection{CLC Typing}\label{sec:CLC-typing}

 In order to guarantee that all the variants of a CLC SPL are well-typed LC programs
 we define a family-based type system  for CLC SPLs. 
 
 An \emph{annotated type environment} $\Delta$ is a finite mapping from \FDadd{variables}  to pairs of a type and a proposition formula over features, written 
$\overline{\lstinline?x?}:\overline{\lstinline?T?}\overset{\text{with } \overline{\psi}}{\;}$ which (as usual) is short for
 $\lstinline?x?_1:\lstinline?T?_1 \overset{\text{with }\psi_1}{\;},...,\lstinline?x?_n:\lstinline?T?_n\overset{\text{with }\psi_n}{\;}$  with $n\ge 0$.
The family-based typing judgements are:
\begin{itemize}
\item
$\phi  \deriv \lstinline?Prg?  \; \textsc{ok}$, to be read ``the SPL $(\Phi,\lstinline{Prg},\CK)$  with $\Phi=(\mathcal{F},\phi)$ is well  typed'';
\item
$\theta\deriv \lstinline{SD} \; \textsc{ok}$, to be read ``\lstinline{SD} is well typed under the configuration assumption\footnote{We call configuration assumption the propositional formulas that we use in the family-based typing judgements.} $\theta$'';
\item
$\theta \deriv \lstinline{FD} \; \textsc{ok}$, to be read ``\lstinline{FD} is well typed under the configuration assumption $\theta$''; and
\item
$\theta; \Delta \deriv \lstinline{e}: \lstinline?T?$,  to be read ``expression  \lstinline{e} has type \lstinline?T? under the configuration assumption $\theta$ and the assumptions in the annotated type environment $\Delta$''.
\end{itemize}
 The family-based typing rules are given in \Cref{fig:CLC-typing}, where we use the abbreviations introduced in Notation~\ref{not:AT-formulas} and
in the premises of rule \textsc{FT-prg} we write ``$\phi \,\&\&\, \CK(\overline{\lstinline?SD?}) \deriv \overline{\lstinline?SD?}  \; \textsc{ok}$''
as abbreviation for ``$\phi \,\&\&\, \CK(\lstinline?SD?_1) \vdash \lstinline?FD?_1  \; \textsc{ok}$ $\cdots$ $\phi \,\&\&\, \CK(\lstinline?SD?_n) \vdash \lstinline?FD?_n  \; \textsc{ok}$''
where $n\ge 0$ is length of the sequence  $\overline{\lstinline?SD?}$, 
and similarly for ``$\phi \,\&\&\, \CK(\overline{\lstinline?FD?}) \deriv \overline{\lstinline?FD?}  \; \textsc{ok}$''.
\FDadd{As usual,  $\theta\models\theta'$ means that the propositional formula $\theta\Rightarrow\theta'$ is valid (i.e., it is satisfied by all assignments of truth values to its variables).}

\begin{figure}
  \hrule
  \vspace{2pt}
  \textbf{SPL family-based typing} \hfill \fbox{\tcbox{$\phi$} $\deriv \lstinline?Prg?  \; \textsc{ok}$}
   \vspace{5pt}
  \\
  \begin{prooftree}[right label template=\tiny(\inserttext),separation=0.9em] 
       \hypo{ \lstinline!Prg! \; \textsc{sane}  }
           \hypo{\tcbox{\CK(\lstinline?main?) = \lstinline{1}} }
 \hypo{ \lstinline!Prg! = \overline{\lstinline?SD?}\;\overline{\lstinline?FD?} }
 \hypo{\tcbox{$\phi \,\&\&\, \CK(\overline{\lstinline?SD?})\deriv \overline{\lstinline?SD?}  \; \textsc{ok}$} }
 \hypo{ \tcbox{$\phi \,\&\&\, \CK(\overline{\lstinline?FD?})$} \!\!\deriv \overline{\lstinline?FD?}  \; \textsc{ok}}
\infer5[\textsc{FT-prg}]{ \tcbox{$\phi$}\deriv  \lstinline!Prg!\; \textsc{ok} }
\end{prooftree}
  \vspace{5pt}
 \hrule
      \vspace{2pt}
  \tcbox{\textbf{Structure family-based typing}} \hfill \fbox{\tcbox{$\theta\deriv \lstinline{SD} \; \textsc{ok}$}}
   \vspace{5pt}
   \\
  \tcbox{\begin{prooftree}[right label template=\tiny(\inserttext)]
    \hypo{\theta  \models   \CK(\overline{ \lstinline!T! }\ \overline{\lstinline!m!})  \Rightarrow  \CK(\overline{ \lstinline!T! }) } 
    \infer1[\textsc{FT-struct}]{ \theta \deriv \lstinline?struct s?\{\overline{ \lstinline!T! }\ \overline{\lstinline!m!} \}    \; \textsc{ok}}
  \end{prooftree}
}
  \vspace{5pt}
  \hrule
      \vspace{2pt}
  \textbf{Function  family-based typing} \hfill \fbox{\tcbox{$\theta$} $\deriv \lstinline{FD} \; \textsc{ok}$}
   \vspace{5pt}
  \\
  \begin{prooftree}[right label template=\tiny(\inserttext)]
    \hypo{ \tcbox{$\theta \models \CK(\lstinline!T!_0)$} }
    \hypo{ \tcbox{$\theta \models \CK(\overline{ \lstinline!T! }\ \overline{\lstinline!x!})  \Rightarrow  \CK(\overline{ \lstinline!T! })$} }
 \hypo{  \tcbox{$\theta;$} \overline{\lstinline?x?}  : \overline{\lstinline?T?} \tcbox{$\overset{\text{ with } \CK(\overline{ \lstinline!T! }\ \overline{\lstinline!x!})}{\;}$}   \deriv \lstinline{e}: \lstinline{T}_1 } 
 \hypo{  \lstinline?T?_1  \le\lstinline?T?_0  } 
\infer4[\textsc{FT-fun}]{ \tcbox{$\theta$}  \deriv \lstinline?T?_0 \; \lstinline?f(?\overline{\lstinline?T?} \; \overline{\lstinline?x?}\lstinline?) \{return e?\lstinline?;\}? \; \textsc{ok} }
\end{prooftree}
\vspace{5pt}
\hrule
  \vspace{2pt}
  \textbf{Expression  family-based typing} \hfill \fbox{\tcbox{$\theta;$} $\Delta \deriv \lstinline{e}: \lstinline?T?$}
   \vspace{5pt}
  \\
\begin{prooftree}[right label template=\tiny\textsc{(\inserttext)}]
\infer0[FT-int]{ \tcbox{$\theta;$}\Delta \deriv \text{n}: \lstinline{int} }
\end{prooftree}
\quad
\begin{prooftree}[right label template=\tiny\textsc{(\inserttext)}]
\infer0[FT-null]{ \tcbox{$\theta;$}\Delta \deriv \NULL: \lstinline{void}\ast }
\end{prooftree}
\quad
\begin{prooftree}[right label template=\tiny\textsc{(\inserttext)}]
  \hypo{ \lstinline?x?:\lstinline?T?\tcbox{$\overset{\text{ with } \psi}{\;}$}  \in\Delta}
  \hypo{ \tcbox{$\theta \models \psi$}}
\infer2[FT-par]{  \tcbox{$\theta;$}\Delta \deriv  \lstinline?x? :\lstinline?T? }
\end{prooftree}
\bigskip

\begin{prooftree}[right label template=\tiny\textsc{(\inserttext)},center=false]
\hypo{ \lstinline!Prg!(\lstinline?f?)=\lstinline?T?_0 \; \lstinline?f(?\overline{\lstinline?T?} \; \overline{\lstinline?x?})\{\ldots \}  }
  \infer[no rule]1{ \tcbox{$\theta \models\CK(\lstinline!Prg!(\lstinline?f?))$}     }
 \hypo{ \tcbox{$\theta;$}\Delta \deriv \overline{\lstinline{e}} :  \overline{\lstinline?T?}' \qquad \overline{\lstinline?T?}'\le\overline{\lstinline?T?}}
 \infer[no rule]1{ \tcbox{$\theta \models\CK( \overline{\lstinline{e}}) \Leftrightarrow \CK(\overline{\lstinline?T?} \; \overline{\lstinline?x?})$} }
 \infer2[FT-app]{  \tcbox{$\theta;$} \Delta \deriv \lstinline?f? (\overline{ \lstinline!e!}) : \lstinline!T!_0  }
\end{prooftree}
\quad
\begin{prooftree}[right label template=\tiny\textsc{(\inserttext)},center=false,separation=0.8em]
\hypo{  \tcbox{$\theta;$}\Delta \deriv \lstinline{e}_0: \lstinline{struct s}\ast }
 \hypo{ \lstinline!Prg!(\lstinline{s})(\lstinline{m}) =  \lstinline?T m?}
  \infer[no rule]1{  \tcbox{$\theta \models \CK(\lstinline?T m?)$}  }
\infer2[FT-member]{ \tcbox{$\theta;$} \Delta \deriv \lstinline{e}_0\lstinline{->m} : \lstinline?T? }
\end{prooftree}
\medskip

\begin{prooftree}[right label template=\tiny\textsc{(\inserttext)}]
  \hypo{ \tcbox{$\theta;$}\Delta \deriv \lstinline{e}_0: \lstinline{struct s}\ast }
  \hypo{ \lstinline!Prg!(\lstinline{s})(\lstinline{m}) =  \lstinline?T m?}
   \hypo{ \tcbox{$\theta;$}\Delta \deriv \lstinline{e}_1:\lstinline?T?_1}
    \hypo{  \lstinline?T?_1\le\lstinline?T?}
    \hypo{   \tcbox{$\theta \models \CK(\lstinline?T m?)$}  }
\infer5[FT-assign]{ \tcbox{$\theta;$}\Delta \deriv \lstinline{e}_0\lstinline{->m}=\lstinline{e}_1 : \lstinline?T? }
\end{prooftree}
  \bigskip

\begin{prooftree}[right label template=\tiny\textsc{(\inserttext)},center=false]
 \hypo{  \tcbox{$\theta;$}\Delta \deriv \lstinline{e}_0: \lstinline?T?_0 }
  \hypo{\lstinline?T?_3=\maxtype\{\lstinline?T?_1,\lstinline?T?_2\} }
  \hypo{ \tcbox{$\theta;$}\Delta \deriv \lstinline{e}_1:\lstinline?T?_1  }
  \hypo{\tcbox{$\theta;$}\Delta \deriv \lstinline{e}_2:\lstinline?T?_2}
\infer4[FT-cond]{ \tcbox{$\theta;$}\Delta \deriv \lstinline{e}_0\lstinline{?e}_1\lstinline{:e}_2 : \lstinline!T!_3 }
\end{prooftree}
\bigskip

  \begin{prooftree}[right label template=\tiny\textsc{(\inserttext)},center=false]
    \hypo{  
      \tilde{\lstinline!e!}=\lstinline!e!_1,\ldots,\lstinline!e!_n
    }
    \infer[no rule]1{
      \tcbox{$\theta\models  \exists(\lstinline!e!_1,\ldots,\lstinline!e!_n)$}
   }
    \hypo{  
      \tcbox{$\theta \&\&\CK(\lstinline!e!_1);$}\Delta \deriv \lstinline!e!_1 :\lstinline!T!_1  \quad \cdots \quad  \tcbox{$\theta \&\&\CK(\lstinline!e!_n);$}\Delta \deriv \lstinline!e!_n :\lstinline!T!_n 
       \qquad  
       \lstinline!T!= \lstinline!T!_n
    }
    \infer[no rule]1{
      \tcbox{$\text{ $\forall i\in\{1....,n\}$, if } (\lstinline!T!_i \neq \lstinline?T?) \text{ then }
      (\theta\models \texttt{neverLast}(i,\tilde{\lstinline!e!}))$}
   }
\infer2[FT-seq]{ \tcbox{$\theta;$}  \Delta \deriv (\tilde{\lstinline!e!}) :  \lstinline?T? }
\end{prooftree}
\bigskip

\begin{prooftree}[right label template=\tiny\textsc{(\inserttext)},center=false]
  \hypo{ \tcbox{$\theta;$}\Delta \deriv \lstinline{e}_0 :\lstinline?T?_0}
 \infer[no rule]1{  \typeof(\texttt{uop})=(\lstinline?T?_0)\!\!\rightarrow\!\!\lstinline?int?}
\infer1[FT-uop]{   \tcbox{$\theta;$}\Delta \deriv \texttt{uop}(\lstinline!e!_0) :  \lstinline?int? }
\end{prooftree}
\qquad
\begin{prooftree}[right label template=\tiny\textsc{(\inserttext)},center=false]
   \hypo{ \tcbox{$\theta;$}\Delta \deriv \lstinline{e}_1 : \lstinline?T?_1}
   \infer[no rule]1{  \lstinline?T?_3=\maxtype\{\lstinline?T?_1,\lstinline?T?_2\} } 
    \hypo{ \tcbox{$\theta;$}\Delta \deriv \lstinline{e}_2 : \lstinline?T?_2  }
\infer[no rule]1{\typeof(\texttt{bop})  =(\lstinline?T?_3,\lstinline?T?_3)\!\!\rightarrow\!\! \lstinline?int?  }
\infer2[FT-bop]{\tcbox{$\theta;$}\Delta \deriv \lstinline{e}_1 \;\texttt{bop}\; \lstinline{e}_2 : \lstinline?int? }
\end{prooftree}
\bigskip

\begin{prooftree}[right label template=\tiny\textsc{(\inserttext)},center=false]
  \hypo{\lstinline?s?\in\domof(\overline{ \lstinline!SD!})}
  \hypo{  \tcbox{$\theta \models \CK(\lstinline?Prg(s)?)$} }
  \infer2[FT-malloc]{ \tcbox{$\theta;$}\Delta \deriv \MALLOC(\lstinline?struct?\lstinline?s?): \lstinline{struct s}\ast }
\end{prooftree}
\qquad
\begin{prooftree}[right label template=\tiny\textsc{(\inserttext)},center=false]
\hypo{ \tcbox{$\theta;$}\Delta \deriv \lstinline{e}_0:\lstinline?struct?\lstinline?s?\ast}
\infer1[FT-mfree]{ \tcbox{$\theta;$}\Delta \deriv \MFREE(\lstinline{e}_0): \lstinline{void}\ast }
\end{prooftree}
   \vspace{5pt}
  \hrule
   \caption{Family-based typing of a CLC  SPL  $(\Phi,\lstinline{Prg},\CK)$  with $\Phi=(\mathcal{F},\phi)$; the extensions w.r.t.\  
    LC typing (in \Cref{fig:LC-typing}) are \tcbox{highlighted in grey.} }
  \label{fig:CLC-typing}
\end{figure}

 The CLC family-based  type system (which types a CLC SPL as whole) checks some \emph{presence-in-variant} conditions in addition to the standard checks done by the LC type system (which types a single LC program).
 For instance, we have to express condition like  ``whenever code fragment $a$ is present (in a variant), then also code fragment $b$ is present (in the same variant)'',
 or in other words ``whenever the annotation of $a$ evaluates to 1 (modulo a product $p$) then also the annotation of $b$ evaluates to 1 (modulo the same product $p$)''.

Namely, each family-based typing rule is obtained from an LC typing rule (cf.\ \Cref{fig:LC-typing}) by adding checks (\tcbox{highlighted in grey}) involving the  FM and CK (i.e., the annotation table) ---
in particular each family-based typing rule \textsc{FT-<name>} is obtained from the typing rule \textsc{T-<name>}.

The  extra-conditions checked by CLC SPL family-based typing follow.
 \begin{description}
 \item[P1] (in rule \textsc{FT-prg}) the function \lstinline{main} must be present.
  \item[S1] (in rule \textsc{FT-struct}) 
 If a member \lstinline{T m} of \lstinline{struct} \lstinline{s} is present  then its type \lstinline{T} is present.
 \item[F1] (in rule \textsc{FT-fun})  If  function definition \lstinline{T$_0$ f} is present then its return type  \lstinline{T$_0$} is present.
 \item[F2] (in rule \textsc{FT-fun}) If an actual parameter \lstinline{T x} (of a present function)  is  present then its type \lstinline{T} is present.
 \item[E1] (in rule \textsc{FT-par}) If the occurrence of a function formal parameters \lstinline?x? is  present in the body of a (present) dunction then
 declaration of  \lstinline{x} in the header of the function is present.
 \item[E2] (in rule \textsc{FT-app}) If a function call expression \lstinline?f?($\bar{\lstinline!e!}$)  is present then: (i) the function \lstinline?f? is present
   and (ii) the types and the number of expressions in the sequence of actual paremeter list $\bar{\lstinline!e!}$ match those of the formal parameter list in the definition of \lstinline?f?.
 \item[E3] (in rules \textsc{FT-member} and \textsc{FT-assign})
If an expression \lstinline?e->m? is present then the  structure \lstinline{s} such that
   $\lstinline{struct s}\ast$  is the type of \lstinline{e} is present and containts a member \lstinline{m}.
   \item[E4] (in rule \textsc{FT-seq}) If an expression sequence $(\lstinline{e}_1,...,\lstinline{e}_n)$ is present    then at least  an expression of the sequence is present. 
 \item[E5] (in rule \textsc{FT-seq}) If an expression $\lstinline{e}_i$ $(1\le i\le n)$ in an sequence $(\lstinline{e}_1,...,\lstinline{e}_n)$ has a type $\lstinline{T}_i$ that is different
 from the type  $\lstinline{T}_n$ of the last expression of the sequence, then for  some $j\in\{i+1,...,n\}$ the expression  $\lstinline{e}_j$  is present (so $\lstinline{e}_i$  cannot be the last included in the list in any variant).
\item[E6] (in rule \textsc{FT-malloc}) If an expression \MALLOC(\lstinline?struct?\lstinline?s?) is  present then the struct \lstinline{s} is present.
 \end{description}


  Let $(\Phi,\lstinline{Prg},\CK)$ be an SPL. If $p$ is a valid product then $\lsem \lstinline?Prg? \rsem_p$ 
    is a program in accord
    with the grammar of Figure~\ref{fig:LC-syntax} (possibly not well-typed).
 
\begin{exmp}[Family-based typing of the Queue SPL]\label{exa:CLC-QueueSPL-typing}
The Queue SPL given in Example~\ref{exa:CLC-QueueSPL} is a well-typed CLC SPL.
\end{exmp}

\subsection{Variant Generation}

In order to generate program variants, we have to define the variant generator VG for CLC SPLs.
Given product $p$,  the generator VG descends the syntactic tree of the \lstinline?Prg? (viz. the CB of the SPL)
and removes all code fragments annoted by the annotation table $\CK$ with a formula $\phi$ which evaluates to $0$ under the interpretation identified by $p$.

  Given a code base $\CB$ and a product $p$, we denote $\lsem \CB \rsem_p$ the corresponding variant.
The definition of the variant generator is given in Fig.~\ref{fig:VG}, it is defined by means of two mutual recursive defined functions:
  $\lsem \_ \rsem_p$ suppress the elements of list annoted with formulas evaluated in false by $p$,
while  $\llangle \_ \rrangle_p$  descends the syntax tree and apply $\lsem \_ \rsem_p$ to annotated sequences.

\begin{figure} 
  $$\begin{array}{rcl<{\hspace{3.8mm}\scalebox{.8}{(\textsc{vgList})}}}
   \hline
   \\
    \lsem a_1 \ldots a_n \rsem_p   & := & 
    \left\{
            \begin{array}{ll}
         \bullet & \text{if }  n==0\\                                     
           \llangle  a_1  \rrangle \lsem a_2 \ldots a_n \rsem_p & \text{if } p\models\CK(a_1), n\geq 1\\
            \lsem a_2 \ldots a_n \rsem_p & \text{if } p\not\models\CK(a_1), n\geq 1\\
           \end{array}
           \right.\\
           \\
      \hline
\end{array}$$
  $$\begin{array}[t]{rcl<{\hspace{\fill}\scalebox{.8}{(\textsc{vg\therowVG})\stepcounter{rowVG}}}}
  \\
   \llangle \lstinline?struct$\,$s?\{\overline{ \lstinline!T! }\,\overline{\lstinline!m!};\};\rrangle_p   &:= &  \lstinline?struct$\,$s?\{ \ \lsem \overline{ \lstinline!T! }\ 
   \overline{\lstinline!m!}; \rsem_p  \};
  \\
      \llangle \lstinline?T f(?\overline{\lstinline?T?}\,\overline{\lstinline?x?}\lstinline?){return e;}? \rrangle_p  &:= & \lstinline?T f(?\lsem \overline{\lstinline?T?}\ \overline{\lstinline?x?}\rsem_p\lstinline?){return? \ \llangle \lstinline?e?\rrangle_p ;\}
    \\
    \hline
      \llangle \lstinline?T m?\rrangle_p&:= & \lstinline?T m? 
      \\
    \llangle \lstinline?T x?\rrangle_p&:= & \lstinline?T x?  
        \\
    \hline
  \llangle \lstinline?n?\rrangle_p&:= & \lstinline?n? \\
  \llangle \NULL\rrangle_p&:= &  \NULL \\
  \llangle \lstinline?x?\rrangle_p&:= &  \lstinline?x?\\
  \llangle \lstinline?f? (\bar{\lstinline!e!}) \rrangle_p&:= & \lstinline?f? (\ \lsem \bar{\lstinline!e!}\rsem_p  \ );  \\
  \llangle \lstinline?e->m ? \rrangle_p&:= &  \llangle \lstinline?e?\rrangle_p\lstinline?->m? \\
  \llangle \lstinline?e->m=e?   \rrangle_p&:= & \llangle\lstinline?e?\rrangle_p\lstinline?->m? = \llangle \lstinline?e?\rrangle_p\\
  \llangle  \lstinline!e?e:e!   \rrangle_p&:= & \llangle\lstinline?e?\rrangle_p \ \lstinline'?'\ \llangle\lstinline?e?\rrangle_p  \ \lstinline':'\  \llangle\lstinline?e?\rrangle_p \\
  \llangle (\tilde{\lstinline'e'})  \rrangle_p&:= &  (\lsem \bar{\lstinline!e!}\rsem_p  )  \\
  \llangle \uop \lstinline? e? \rrangle_p&:= & \uop \llangle\lstinline?e?\rrangle_p\\
  \llangle \lstinline?e ? \bop \lstinline? e?      \rrangle_p&:= & \llangle\lstinline?e?\rrangle_p \;\bop\;\llangle\lstinline?e?\rrangle_p \\
  \llangle  \MALLOC(\lstinline?struct?\lstinline?s?) \rrangle_p&:= & \MALLOC(\lstinline?struct?\lstinline?s?) \\
      \llangle \MFREE(\lstinline?e?)\rrangle_p&:= & \MFREE(\llangle\lstinline?e?\rrangle_p) \\
      \\
 \hline
\end{array}$$
\caption{Variant generator of the CLC  SPL $(\Phi,\overline{\lstinline!SD! }\;\overline{\lstinline!FD!},\CK)$
 applied to the product $p$.}
\label{fig:VG}
 \end{figure}

\begin{exmp}[The variants of the Queue SPL]\label{exa:CLC-QueueSPL-variants}
Consider the well-typed Queue SPL given in Example~\ref{exa:CLC-QueueSPL}. Its variants, generated by applying the VG  to the annotated code base  illustrated in \Cref{fig:queue-spl-stripped}, are  the six well-typed LC programs given in Example~\ref{exa:LC-programs-typing}.
\end{exmp}


\section{Properties}\label{sec:SPL-type-safety}
An SPL of programs written in a given programing language PL is \emph{type safe} if all its variants are well-typed programs
according to the type system of PL.
In this section we prove \emph{type safety} for CLC, i.e., 
well-typed CLC SPLs  are type safe.
\FDaddCR{We first prove (in \Cref{sec:global}) that the code base of a well-typed CLC SPL is a well-typed LC program and some auxiliary lemmas; then
(in \Cref{sec:CLC-type-safety}) we prove some basic guaranntess on variant generation and the type-safety result.}

\subsection{The Code Base of a Well-typed CLC SPL is a Well-typed LC Program \FDaddCR{and  Some Auxiliary Lemmas}}\label{sec:global}

\begin{thm}[The code base of a well-typed CLC SPL is a well-typed LC program]\label{lem:global}
Consider a CLC SPL $(\Phi,\lstinline{Prg},\CK)$  with $\Phi=(\mathcal{F},\phi)$.
  If $\phi\deriv \lstinline?Prg?  \; \textsc{ok}$ then $\vdash \lstinline?Prg?  \; \textsc{ok}$. 
\end{thm}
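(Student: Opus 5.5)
The plan is an erasure argument: a family-based derivation contains an LC derivation once all the grey premises are dropped.

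First I would prove an auxiliary lemma on expressions. Given an annotated environment $\Delta=\overline{\lstinline?x?}:\overline{\lstinline?T?}\overset{\text{with }\overline{\psi}}{\;}$, let $|\Delta|$ denote $\overline{\lstinline?x?}:\overline{\lstinline?T?}$. The lemma states that for every $\theta$, if $\theta;\Delta\deriv\lstinline{e}:\lstinline?T?$ then $|\Delta|\vdash\lstinline{e}:\lstinline?T?$. I would prove it by induction on the derivation of $\theta;\Delta\deriv\lstinline{e}:\lstinline?T?$, with one case per rule. Each rule \textsc{FT-<name>} becomes \textsc{T-<name>} once the grey premises are removed. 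These are the entailments $\theta\models\cdots$ and the changes to the configuration assumption. None of the remaining premises mention $\theta$, and the conclusion keeps the same type.

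The cases that need a little care are the following.
\begin{itemize}
\item \textsc{FT-par}: the entry $\lstinline?x?:\lstinline?T?\overset{\text{with }\psi}{\;}\in\Delta$ gives $\lstinline?x?:\lstinline?T?\in|\Delta|$ directly.
\item \textsc{FT-seq}: the premises are typed under the different assumptions $\theta\,\&\&\,\CK(\lstinline!e!_i)$. This is why the lemma must be stated for all $\theta$. The induction hypothesis then yields $|\Delta|\vdash\lstinline!e!_i:\lstinline!T!_i$ for each $i$, and $\lstinline!T!=\lstinline!T!_n$ is exactly the side condition of \textsc{T-seq}.
\item \textsc{FT-app}, \textsc{FT-member}, \textsc{FT-assign}: these refer to $\lstinline!Prg!(\lstinline?f?)$ and $\lstinline!Prg!(\lstinline{s})(\lstinline{m})$. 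Here $\lstinline!Prg!$ is the same whole code base that \textsc{T-app}, \textsc{T-member} and \textsc{T-assign} consult, so the lookups coincide.
\item \textsc{FT-app}: the length agreement between $\overline{\lstinline{e}}$ and $\overline{\lstinline?T?}\,\overline{\lstinline?x?}$ is already implicit in the non-grey premise $\overline{\lstinline?T?}'\le\overline{\lstinline?T?}$. It is also stated explicitly by the convention of Notation~\ref{not:AT-formulas}.
\item \textsc{FT-malloc}: it keeps the premise $\lstinline?s?\in\domof(\overline{\lstinline!SD!})$ needed by \textsc{T-malloc}.
\end{itemize}

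Next I would lift the lemma to functions: if $\theta\deriv\lstinline{FD}\;\textsc{ok}$ then $\vdash\lstinline{FD}\;\textsc{ok}$. The third premise of \textsc{FT-fun} has environment $\Delta=\overline{\lstinline?x?}:\overline{\lstinline?T?}\overset{\text{with }\CK(\overline{\lstinline!T!}\,\overline{\lstinline!x!})}{\;}$, whose erasure is $\overline{\lstinline?x?}:\overline{\lstinline?T?}$. The expression lemma then yields $\overline{\lstinline?x?}:\overline{\lstinline?T?}\vdash\lstinline{e}:\lstinline{T}_1$. The premise $\lstinline?T?_1\le\lstinline?T?_0$ is carried over unchanged, so \textsc{T-fun} applies.

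Finally, for the theorem I would invert \textsc{FT-prg}. This gives $\lstinline!Prg!\;\textsc{sane}$, the decomposition $\lstinline!Prg!=\overline{\lstinline?SD?}\;\overline{\lstinline?FD?}$, and $\phi\,\&\&\,\CK(\lstinline?FD?_i)\deriv\lstinline?FD?_i\;\textsc{ok}$ for each $i$. The function lemma then gives $\vdash\lstinline?FD?_i\;\textsc{ok}$ for each $i$, and \textsc{T-prg} concludes $\vdash\lstinline?Prg?\;\textsc{ok}$. The struct premises and $\CK(\lstinline?main?)=\lstinline{1}$ are simply discarded, since LC has no struct-typing judgement.

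I expect no real obstacle here. The only point requiring attention is to generalize the induction hypothesis over $\theta$, because of \textsc{FT-seq}. I would also double-check that every non-grey premise of each FT rule literally matches its T counterpart, including the use of $\maxtype$ in \textsc{FT-cond} and \textsc{FT-bop}, and the operator typing in \textsc{FT-uop}. Those checks are routine.
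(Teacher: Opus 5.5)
Your proposal is correct and uses the same erasure argument as the paper. The paper drops the grey premises of each \textsc{FT} rule (the struct-typing and $\CK(\texttt{main})=\texttt{1}$ premises of \textsc{FT-prg}, and all $\models$ side conditions) and reads each \textsc{FT} rule as its \textsc{T} counterpart, by induction on the family-based derivation. Your version is simply more explicit: you state the expression lemma for arbitrary $\theta$ with the erased environment $|\Delta|$, which is exactly what the paper's ``straightforward induction'' needs in order to handle the premises of \textsc{FT-seq}, typed under $\theta\,\&\&\,\CK(\texttt{e}_i)$.
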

\begin{proof}
    First recall that the LC typing rules, in  \Cref{fig:LC-typing}, can be otained from the family-based typing rules, in \Cref{fig:CLC-typing}, by ignoring
    the parts \tcbox{highlighted in grey.} Namely, by:
    (i) neglecting the premises about the typing of structures in \textsc{FT-prg} (indeed, LC just requires the sanity condition about structures); and 
    (ii)  neglecting the premises that involve $\models$ in the other rules (i.e. checks about annotations). 
  Then, the proof is straightforward by induction on the derivation $\mathcal{D}$ proving $\phi\deriv \lstinline?Prg?  \; \textsc{ok}$.
\end{proof}

\FDaddCR{We now present some auxiliary lemmas.}

 Let $\mathcal{D}$  be the derivation proving  $\phi\deriv \lstinline?Prg?  \; \textsc{ok}$.
   It is worth to remark that the derivation $\mathcal{D}'$ of  $\vdash \lstinline?Prg?  \; \textsc{ok}$ built in the proof of
   \Cref{lem:global}
 is, roughly speaking, obtained from $\mathcal{D}$ by  removing all gray-identified (in accord to \Cref{fig:CLC-typing}) parts of its rules and then replacing each  rule \textsc{FT-<name>} with rule \textsc{T-<name>}.

\begin{lemma}[On subtyping an annotations in family-based typing]\label{lem:annotSubtyping}
    Consider a CLC SPL $(\Phi,\lstinline{Prg},\CK)$  and two types
    $\lstinline?T?_0, \lstinline?T?_1$ ocurring in \lstinline{Prg}.
    If $\theta\models\CK(\lstinline?T?_1)$ and $\lstinline?T?_0 \le\lstinline?T?_1$ then $\theta\models\CK(\lstinline?T?_0)$.
\end{lemma}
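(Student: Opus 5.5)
The plan is a direct case analysis on the subtyping judgement $\lstinline?T?_0 \le\lstinline?T?_1$, using the definition of $\le$ and the abbreviation $\CK(\lstinline{T})$ from Notation~\ref{not:AT-formulas}. By definition, $\le$ is the reflexive closure of the relation generated by $\lstinline?void?\ast \le \lstinline?struct?\,\lstinline?s?\ast$ for every struct \lstinline?s? defined in the program. So $\lstinline?T?_0 \le\lstinline?T?_1$ holds in exactly two situations: either $\lstinline?T?_0=\lstinline?T?_1$, or $\lstinline?T?_0=\lstinline?void?\ast$ and $\lstinline?T?_1=\lstinline?struct?\,\lstinline?s?\ast$ for some \lstinline?s?. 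No transitivity step is needed, because the generating relation has no chains of length two: its left-hand side is always $\lstinline?void?\ast$, which never occurs on its right-hand side.

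\textbf{Reflexive case.} If $\lstinline?T?_0=\lstinline?T?_1$, then $\CK(\lstinline?T?_0)$ and $\CK(\lstinline?T?_1)$ are literally the same formula. The hypothesis $\theta\models\CK(\lstinline?T?_1)$ is therefore already the conclusion.

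\textbf{Proper case.} If $\lstinline?T?_0=\lstinline?void?\ast$, then by Notation~\ref{not:AT-formulas} we have $\CK(\lstinline?T?_0)=\lstinline{1}$. The formula $\theta\Rightarrow 1$ is valid for every $\theta$, so $\theta\models\CK(\lstinline?T?_0)$ holds trivially; the hypothesis on $\lstinline?T?_1$ is not even used here.

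There is no real obstacle. The only point needing care is confirming that $\le$ is not closed under transitivity in a way that would yield further pairs, and this is immediate from the shape of the generating relation. One might also note that the hypothesis that the types occur in \lstinline{Prg} (so that $\CK(\lstinline?struct?\,\lstinline?s?\ast)=\CK(\lstinline?Prg(s)?)$ is well defined) is only needed for $\lstinline?T?_1$ to make the statement meaningful, and plays no role in the argument.
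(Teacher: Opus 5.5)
Your proof is correct and is essentially the paper's argument: a direct case analysis on the reflexive closure of $\le$. In the only non-reflexive case, $T_0=\texttt{void}\ast$, so $\CK(T_0)=\texttt{1}$ by Notation~\ref{not:AT-formulas} and the entailment is trivial. The paper splits the cases by the shape of $T_0$ (\texttt{int}, \texttt{struct s}$\ast$, \texttt{void}$\ast$) rather than by reflexive versus proper, but the content is the same.
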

\begin{proof}
  First observe that if $\lstinline?T?_0 \le\lstinline?T?_1$ then the possible cases are as follows:
  (i) if $\lstinline?T?_0 =\lstinline?int?$ then  $\lstinline?T?_1=\lstinline?int?$;
  (ii) if $\lstinline?T?_0=\lstinline?struct?\lstinline?s?\ast$ for some $\lstinline?struct?\lstinline?s?$, then  $\lstinline?T?_1=\lstinline?structs?\lstinline?s?\ast$;
  and,  (iii) if $\lstinline?T?_0 =\lstinline?void?\ast$ then, either $\lstinline?T?_1=\lstinline?void?\ast$ or $\lstinline?T?_1=\lstinline?struct?\lstinline?s?\ast$ for some struct. In all cases the proof is straightforward, in accord to Notation~\ref{not:AT-formulas}.
\end{proof}

 Let $J$ and $J'$ be two judgment occurrences of the derivation $\mathcal{D}$ of  $\phi\deriv \lstinline?Prg?  \; \textsc{ok}$.
 We say that a statement  $J'$  is a descendant of $J$ ($J$ is an ascendant of $J'$) whenever $J'$ occurs in the (sub)derivation of $J$.

\begin{lemma}[On the formulas on the left of  family-based typing judgment]\label{lem:annotDerivation}
Consider a CLC SPL $(\Phi,\lstinline{Prg},\CK)$  with $\Phi=(\mathcal{F},\phi)$.
Let $J$ and $J'$ be two judgment 
occurrences  in the derivation of $\phi\deriv \lstinline?Prg?  \; \textsc{ok}$ such that $J'$  is a descendant of $J$.
If $\theta_J$ is the leftmost formula of a judgment $J$ and $\theta'_J$ is the leftmost formula  of  $J'$ 
then  $\theta'_J\models\theta_J$.
\end{lemma}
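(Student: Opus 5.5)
The plan is to prove the statement by induction on the distance between $J$ and $J'$ in the derivation $\mathcal{D}$ of $\phi\deriv \lstinline?Prg?  \; \textsc{ok}$. Since $\models$ is transitive (if $\theta''\models\theta'$ and $\theta'\models\theta$ then $\theta''\models\theta$) and reflexive, it suffices to establish the \emph{one-step} property: whenever $J'$ is an immediate premise of the rule instance whose conclusion is $J$, the leftmost formula of $J'$ entails the leftmost formula of $J$. Chaining this along the path from $J$ down to $J'$ gives the claim; the base case $J=J'$ is reflexivity.

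The one-step property is proved by inspecting every rule of \Cref{fig:CLC-typing}, considering only premises that are themselves judgments (side conditions with $\models$, sanity, or subtyping are not judgment occurrences). Three groups arise. First, in \textsc{FT-prg} the conclusion has leftmost formula $\phi$ and the judgment premises have leftmost formulas $\phi \,\&\&\, \CK(\lstinline?SD?_i)$ or $\phi \,\&\&\, \CK(\lstinline?FD?_i)$, which entail $\phi$ by $\&\&$-elimination. Second, in \textsc{FT-seq} the premises have leftmost formulas $\theta\,\&\&\,\CK(\lstinline!e!_i)$, which again entail $\theta$. Third, in all remaining rules (\textsc{FT-fun}, \textsc{FT-app}, \textsc{FT-member}, \textsc{FT-assign}, \textsc{FT-cond}, \textsc{FT-uop}, \textsc{FT-bop}, \textsc{FT-mfree}) each judgment premise carries exactly the same $\theta$ as the conclusion, so the entailment is trivial; \textsc{FT-struct}, \textsc{FT-int}, \textsc{FT-null}, \textsc{FT-par} and \textsc{FT-malloc} have no judgment premises and contribute nothing.

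One small point to handle explicitly is the shape of judgments: the leftmost formula of $\phi\deriv\lstinline?Prg?\;\textsc{ok}$ is $\phi$, of $\theta\deriv\lstinline?SD?\;\textsc{ok}$ and $\theta\deriv\lstinline?FD?\;\textsc{ok}$ is $\theta$, and of $\theta;\Delta\deriv\lstinline?e?:\lstinline?T?$ is $\theta$ (the annotated environment $\Delta$ is not a formula on the left in this sense). With this reading the case analysis above is exhaustive. No step is a genuine obstacle; the only care needed is making sure the rule enumeration is complete and that the abbreviated premises of \textsc{FT-prg} and \textsc{FT-app} (sequences of judgments) are each treated as individual judgment occurrences with the stated leftmost formulas.
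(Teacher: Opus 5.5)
Your proposal is correct and follows essentially the same route as the paper. Both argue by induction over the derivation, observing that each judgment premise carries either exactly the conclusion's leftmost formula or a conjunction $\theta\,\&\&\,\psi$ with it (the latter only in \textsc{FT-prg} and \textsc{FT-seq}), and then chain the one-step entailments by transitivity of $\models$; your version simply makes the rule-by-rule enumeration and the transitivity step explicit.
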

\begin{proof}
The proof is done by induction on the derivation of  $\phi\deriv \lstinline?Prg?  \; \textsc{ok}$. Straighforwardly, 
 each leftmost formula $\theta'$ in premises is: (i) the same leftmost formula $\theta$
 of the judgment in the conclusion of the rule; or, (ii) the conjunction of $\theta$ with another formula.
Therefore, if  $\theta_J$ is the leftmost formula of a judgment $J$ having $J'$ as descendant (leftmost annotated $\theta'_J$), then  $\theta'_J\models\theta_J$.
\end{proof}

\begin{lemma}[On family-based typing of expressions]\label{lem:exprType}
     Consider a CLC SPL $(\Phi,\lstinline{Prg},\CK)$  with $\Phi=(\mathcal{F},\phi)$.
If  $\theta; \Delta \deriv \lstinline?e?  : \lstinline?T?$ is included in
  a derivation $\mathcal{D}^{\lstinline?Prg?}$ proving $\phi\deriv \lstinline?Prg?  \; \textsc{ok}$
  then  $\theta\models\CK(\lstinline?T?)$.
\end{lemma}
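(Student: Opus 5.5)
We argue by induction on the derivation of $\theta; \Delta \deriv \lstinline?e? : \lstinline?T?$ inside $\mathcal{D}^{\lstinline?Prg?}$, proving the stronger invariant that every expression judgment in $\mathcal{D}^{\lstinline?Prg?}$ satisfies $\theta\models\CK(\lstinline?T?)$. Induction is the right shape because every premise of an expression rule is again an expression judgment in the same derivation. By \Cref{lem:annotDerivation}, the configuration assumption of each premise entails that of the conclusion. In \textsc{FT-seq} the premises have assumption $\theta\,\&\&\,\CK(\lstinline!e!_i)$, which is strictly stronger than $\theta$, so there we must combine the induction hypotheses with the side conditions.

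The base cases are direct. For \textsc{FT-int}, \textsc{FT-null}, \textsc{FT-uop} and \textsc{FT-bop} the type is \lstinline?int? or \lstinline?void?$\ast$, so $\CK(\lstinline?T?)=1$ by Notation~\ref{not:AT-formulas}. For \textsc{FT-malloc} the premise $\theta\models\CK(\lstinline?Prg(s)?)$ is exactly the goal.

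For \textsc{FT-par}, we have $\lstinline?x?:\lstinline?T?$ with $\psi$ in $\Delta$ and $\theta\models\psi$. The environment $\Delta$ is created only by \textsc{FT-fun}, with $\psi=\CK(\lstinline?T x?)$. That application of \textsc{FT-fun} has assumption $\theta_0$ with $\theta_0\models\CK(\lstinline?T x?)\Rightarrow\CK(\lstinline?T?)$. By \Cref{lem:annotDerivation}, $\theta\models\theta_0$. Hence $\theta\models\CK(\lstinline?T?)$. The case \textsc{FT-app} is analogous: \textsc{FT-fun} applied to the definition of \lstinline?f? in \textsc{FT-prg} gives $\phi\,\&\&\,\CK(\lstinline!Prg!(\lstinline?f?))\models\CK(\lstinline!T!_0)$. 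Since $\theta\models\CK(\lstinline!Prg!(\lstinline?f?))$, and $\theta\models\phi$ by \Cref{lem:annotDerivation} because every expression judgment descends from the root, we get $\theta\models\CK(\lstinline!T!_0)$.

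For \textsc{FT-member} and \textsc{FT-assign}, we have $\theta\models\CK(\lstinline?T m?)$. By \textsc{FT-struct} for \lstinline?s? in \textsc{FT-prg}, $\phi\,\&\&\,\CK(\lstinline?Prg(s)?)\models\CK(\lstinline?T m?)\Rightarrow\CK(\lstinline?T?)$. We also need $\theta\models\CK(\lstinline?Prg(s)?)$. The induction hypothesis on $\lstinline?e?_0$ gives $\theta\models\CK(\lstinline{struct s}\ast)$, which is exactly $\CK(\lstinline?Prg(s)?)$. Together with $\theta\models\phi$ this yields the goal.

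For \textsc{FT-cond}, $\lstinline?T?_3=\maxtype\{\lstinline?T?_1,\lstinline?T?_2\}$ is one of $\lstinline?T?_1,\lstinline?T?_2$, and both branches have assumption $\theta$, so the induction hypothesis applies directly. If $\maxtype$ is only partially defined, the typing rule already forces the types to be comparable. \textsc{FT-mfree} returns \lstinline?void?$\ast$, so it is trivial.

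\textbf{The main obstacle is \textsc{FT-seq}.} Here $\lstinline?T?=\lstinline?T?_n$, and the induction hypothesis gives only $\theta\,\&\&\,\CK(\lstinline!e!_i)\models\CK(\lstinline?T?_i)$ for each $i$. Fix a valuation satisfying $\theta$. By E4 some $\CK(\lstinline!e!_i)$ holds; let $k$ be the largest such index. If $\lstinline?T?_k\neq\lstinline?T?$, then E5 gives $\theta\models\texttt{neverLast}(k,\tilde{\lstinline!e!})$. Since $\CK(\lstinline!e!_k)$ holds, some $\CK(\lstinline!e!_j)$ with $j>k$ must hold, contradicting maximality. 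Thus $\lstinline?T?_k=\lstinline?T?$. The induction hypothesis for $\lstinline!e!_k$ then yields $\CK(\lstinline?T?)$ under this valuation, and since the valuation was arbitrary, $\theta\models\CK(\lstinline?T?)$.

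The remaining point is the semantic reading of $\models$ as entailment between formulas. This is standard and is how I will phrase the argument throughout.
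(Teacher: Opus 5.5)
Your proof is correct and follows the same route as the paper. Both argue by induction on the typing derivation, discharging \textsc{FT-par}, \textsc{FT-app}, \textsc{FT-member} and \textsc{FT-assign} through the premises of the enclosing \textsc{FT-fun} or \textsc{FT-struct} applications under \textsc{FT-prg}, together with Lemma~\ref{lem:annotDerivation}. You are in fact more careful than the paper in two places. In the member/assign cases you justify $\theta\models\CK(\lstinline?Prg(s)?)$ by the induction hypothesis on $\lstinline?e?_0$, where the paper simply asserts $\theta\models\theta'$. In \textsc{FT-seq} you correctly invoke the side conditions E4 and E5, whereas the paper's ``follow by induction'' alone would only give $\theta\,\&\&\,\CK(\lstinline!e!_i)\models\CK(\lstinline!T!_i)$.
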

\begin{proof}
  If $\mathcal{D}$ be a sub-derivation proving  $\theta; \Delta \deriv \lstinline?e?  : \lstinline?T?$ then, the proof follows by induction on the last rule used in $\mathcal{D}$. 
  Note that $\mathcal{D}$ is a sub-derivation included in $\mathcal{D}^{\lstinline?Prg?}$ by a unique rule \textsc{FT-fun} used immediately over the unique application of  \textsc{FT-prg},  viz. the judgment $\theta; \Delta \deriv \lstinline?e?  : \lstinline?T?$  is about a fragment of the body of a function  $\lstinline?f?_\mathcal{D}$.
  We call this unique  parent rule application the \textsc{FT-fun}-parent of $\mathcal{D}$.
    \begin{itemize}
    \item Cases \textsc{FT-int}, \textsc{FT-null}, \textsc{FT-bop}, \textsc{FT-uop} and \textsc{FT-mfree} are trivial, since we assumed $\CK(\lstinline{int})=\CK(\lstinline{void}\ast)=\lstinline{1}$.
    \item  The case \textsc{FT-par} follows because it verifies $\theta \models \psi$, or more precisely $\theta \models\CK(\lstinline?T?\lstinline?x?)$ for some \lstinline?T?\lstinline?x?.
     Indeed, $\psi$ is the with-labeling formula of a $\lstinline?x?  : \lstinline?T? \overset{\text{ with } \CK(\lstinline!T! \lstinline!x!)}{\;}$ added to the type environment by the  \textsc{FT-fun}-parent of $\mathcal{D}$.  
      This parent rule contains, in its premises,
      $\theta' \models \CK(\lstinline!T! \lstinline!x!)  \Rightarrow  \CK( \lstinline!T!)$ where $\theta'=\phi \&\& \CK(\lstinline!Prg!(\lstinline?f?))$,
      because \textsc{FT-fun} is immediately over the rule \textsc{FT-prg}.
      Since $\theta$ is annotating a descendant judgment, then by Lemma~\ref{lem:annotDerivation} it follows that $\theta\models\theta'$. Thus  $\theta \models \CK(\lstinline!T! \lstinline!x!)  \Rightarrow  \CK( \lstinline!T!)$,
  and we  conclude $\theta \models\CK( \lstinline!T!)$.  
    \item  In case \textsc{FT-app}, the expression $\lstinline?e?$ has to have the shape $\lstinline?f? (\overline{ \lstinline!e!})$ with  premises
      $\theta \models\CK(\lstinline!Prg!(\lstinline?f?))$
      that    ensures the presence of $\lstinline!Prg!(\lstinline?f?)=\lstinline?T?_0 \; \lstinline?f(?\overline{\lstinline?T?} \; \overline{\lstinline?x?})\{\ldots \}$ whenever $\theta$ holds (where $T_0$ is the type of $\lstinline?f? (\overline{ \lstinline!e!})$).
Note that the \textsc{FT-fun}-parent of  $\mathcal{D}$ is used to type a function $\lstinline?f?_\mathcal{D}$ which is possibly different from~$\lstinline?f?$.
Nevertheless, the derivation $\mathcal{D}^{\lstinline?Prg?}$ contains a  subderivation with root a rule \textsc{FT-fun} for each function defined in $\lstinline?Prg?$.
In particular, there is a subderivation  for $\lstinline?f?$  and it includes between its premises: $\theta' \models \CK(\lstinline!T!_0)$
where $\theta'=\phi \&\& \CK(\lstinline!Prg!(\lstinline?f?))$, because \textsc{FT-fun} is immediately over the rule \textsc{FT-prg}.
Clearly $\theta\models\theta'$ and, so  $\theta\models \CK(\lstinline?T?_0)$ follows.
    \item The cases \textsc{FT-member}, \textsc{FT-assign} are similar. They have premise  $\theta \models \CK(\lstinline?T m?)$  about a member of  $\lstinline?struct?\lstinline?s?$.
    Nevertheless, the sanity hypothesis included in $\mathcal{D}^{\lstinline?Prg?}$ by \textsc{FT-Prog} ensures that, there is necessarily a subderivation starting with the rule \textsc{FT-struct}, that verifies the typing of $\lstinline?struct?\lstinline?s?$. This latter has premise $\theta' \models \CK(\lstinline!T!\ \lstinline!m!)  \Rightarrow  \CK(\lstinline!T!)$ where $\theta'=\phi \&\& \CK(\lstinline!Prg!(\lstinline?struct?\lstinline?s?))$. Thus, $\theta\models\theta'$ and the proof easily follows.
   \item Remaining cases follow by induction.
    \end{itemize}
\end{proof}

\subsection{Basic Guarantees on Variant Generation \FDaddCR{and Type Safety for CLC}}\label{sec:CLC-type-safety}

Next theorems ensures that the SPL typing  ensures the coherence of all valid variants.

Until now, we used $\lstinline{Prg}$ as a mapping so that, $\domof(\lstinline{Prg})$ denote its domain. We extend that notation to include ground types, namaly we use  $\extDom(\lstinline{Prg})$ to denote $\domof(\lstinline{Prg})\cup \{ \lstinline!int!, \lstinline!void!\ast\}$.

\begin{thm}[Safety for annotated struct]\label{lem:annotStruct}
  Let $(\Phi,\lstinline{Prg},\CK)$ be an SPL  with $\Phi=(\mathcal{F},\phi)$ such that, both $p$ is a valid product and
  $\lstinline?struct?\lstinline?s?\{ \lstinline!T!_1 \lstinline!m!_1 \ldots \lstinline!T!_n \lstinline!m!_n\}$ is in $\lstinline{Prg}$.
It holds that:
   (i)~$p\models \CK(\lstinline?struct?\lstinline?s?)$ iff $\lstinline?struct?\lstinline?s? \in \domof(\lsem\lstinline{Prg}\rsem_p)$;
    and,
   (ii)~if  
   $p\models \CK(\lstinline?struct?\lstinline?s?)$ then:
   $p \models \CK(\lstinline!T!_i \lstinline!m!_i)$ iff  $\lstinline!T!_i \lstinline!m!_i$ is included in
   $\llangle \lstinline?struct?\lstinline?s?\{\overline{ \lstinline!T! }\ \overline{\lstinline!m!} \} \rrangle_p$;
(iii)~if  $\phi\deriv \lstinline?Prg?  \; \textsc{ok}\ $, $p\models \CK(\lstinline?struct?\lstinline?s?)$ and $p \models \CK(\lstinline!T!_i \lstinline!m!_i)$ then
 $\lstinline!T!_i\in \extDom(\lsem\lstinline{Prg}\rsem_p)$, 
  for all $i$.
\end{thm}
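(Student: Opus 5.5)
The plan is to unfold the definition of the variant generator in \Cref{fig:VG} and, for part (iii), to use the premises of \textsc{FT-prg} and \textsc{FT-struct}. First I fix the reading of $\lsem\lstinline{Prg}\rsem_p$. Since $\lstinline{Prg}=\overline{\lstinline!SD!}\;\overline{\lstinline!FD!}$ and both sequences are annotable, the variant is $\lsem\overline{\lstinline!SD!}\rsem_p\,\lsem\overline{\lstinline!FD!}\rsem_p$, read as a mapping.

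For (i) I would prove, by induction on the length $n$ of the sequence $a_1\ldots a_n$, the general fact that an element $a_i$ contributes $\llangle a_i\rrangle_p$ to $\lsem a_1\ldots a_n\rsem_p$ iff $p\models\CK(a_i)$. This follows from the three clauses of \textsc{vgList}. Instantiating it with $\overline{\lstinline!SD!}$ shows that $\llangle\lstinline?struct?\lstinline?s?\{\ldots\}\rrangle_p$ occurs in the variant iff $p\models\CK(\lstinline?struct?\lstinline?s?)$. Since $\llangle\cdot\rrangle_p$ does not change the struct name (by \textsc{vg0}), and struct names in $\overline{\lstinline!SD!}$ are pairwise distinct, we obtain $\lstinline?struct?\lstinline?s?\in\domof(\lsem\lstinline{Prg}\rsem_p)$ iff $p\models\CK(\lstinline?struct?\lstinline?s?)$. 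Distinctness is needed for the ``only if'' direction: no other kept struct can carry the name \lstinline?s?.

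For (ii) the argument is the same general fact applied to the member list $\overline{\lstinline!T!}\,\overline{\lstinline!m!}$. By \textsc{vg0}, $\llangle\lstinline?struct?\lstinline?s?\{\overline{\lstinline!T!}\,\overline{\lstinline!m!}\}\rrangle_p$ has body $\lsem\overline{\lstinline!T!}\,\overline{\lstinline!m!}\rsem_p$, and $\llangle\lstinline!T m!\rrangle_p=\lstinline!T m!$. Member names are distinct, so ``included'' is unambiguous.

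For (iii), the derivation of $\phi\deriv\lstinline?Prg?\;\textsc{ok}$ ends with \textsc{FT-prg}. That rule contains the premise $\phi\,\&\&\,\CK(\lstinline?struct?\lstinline?s?)\deriv\lstinline?struct?\lstinline?s?\{\ldots\}\;\textsc{ok}$, whose only possible derivation is \textsc{FT-struct}. Hence
\[
\phi\,\&\&\,\CK(\lstinline?struct?\lstinline?s?)\models\CK(\lstinline!T!_i\,\lstinline!m!_i)\Rightarrow\CK(\lstinline!T!_i)\quad\text{for all } i.
\]
A valid product satisfies $p\models\phi$, and by hypothesis $p\models\CK(\lstinline?struct?\lstinline?s?)$ and $p\models\CK(\lstinline!T!_i\,\lstinline!m!_i)$. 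Therefore $p\models\CK(\lstinline!T!_i)$. I would then split on the form of $\lstinline!T!_i$. If it is \lstinline!int! or $\lstinline!void!\ast$, it lies in $\extDom$ by definition. If it is $\lstinline?struct?\lstinline?s'?\ast$, Notation~\ref{not:AT-formulas} gives $\CK(\lstinline!T!_i)=\CK(\lstinline!Prg(s')!)$. Sanity condition (1) in \textsc{FT-prg} ensures $\lstinline?s'?\in\domof(\overline{\lstinline!SD!})$, so part (i) applies to $\lstinline?s'?$ and yields $\lstinline?s'?\in\domof(\lsem\lstinline{Prg}\rsem_p)$.

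The main obstacle is not mathematical but a matter of bookkeeping. One must state precisely the generic ``kept iff annotation holds'' lemma for \textsc{vgList}. One must also read $\extDom$ for a pointer type $\lstinline?struct?\lstinline?s'?\ast$ as membership of the struct name $\lstinline?s'?$ in the domain. I would make both conventions explicit before the case analysis.
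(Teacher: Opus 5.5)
Your proposal is correct and follows the same route as the paper: (i) and (ii) by unfolding \textsc{vgList} on the struct-definition and member lists, and (iii) by extracting the \textsc{FT-struct} premise $\phi\,\&\&\,\CK(\mathtt{struct}\ \mathtt{s})\models\CK(\overline{\mathtt{T}}\,\overline{\mathtt{m}})\Rightarrow\CK(\overline{\mathtt{T}})$ from \textsc{FT-prg} and instantiating it at $p$. You are more explicit where the paper only says the result ``follows from hypotheses on $p$'': you spell out the case split on $\mathtt{T}_i$, the use of sanity together with part (i) for $\mathtt{struct}\ \mathtt{s}'\ast$, and the convention for reading membership of a pointer type in the extended domain.
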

\begin{proof}
  Point (i) follows because, in accord to the rule \textsc{vgList} of the variant generation, $\lsem \_ \rsem_p$ is applied to the list of struct-definitions, and we know that $p\models \CK(\lstinline?struct?\lstinline?s?)$. Likewise, $\lstinline!T!_i \lstinline!m!_i$ is not removed from $\llangle \lstinline?struct?\lstinline?s?\{\overline{ \lstinline!T! }\ \overline{\lstinline!m!} \} \rrangle_p$ because the rule \textsc{vg2}, $\lsem \_ \rsem_p$ is driven by $p \models \CK(\lstinline!T!_i \lstinline!m!_i)$: so  (ii) is proved.
 Point (iii) follows because, in the derivation of  $\phi\deriv \lstinline?Prg?  \; \textsc{ok}\ $ is necessarily concluded with the rule  \textsc{FT-prg}.
  In its premises, there is  $\phi \,\&\&\, \CK(\overline{\lstinline?SD?}) \deriv \overline{\lstinline?SD?}  \; \textsc{ok}\ $ that includes  $\phi \,\&\&\, \CK(\!\lstinline?struct?\lstinline?s?\!) \deriv  \lstinline?struct?\lstinline?s?\{\overline{ \lstinline!T! }\ \overline{\lstinline!m!} \} \; \textsc{ok}\ $ as instance.
  This premise,  in turn, is proved by a rule \textsc{FT-struct} having premises $ \phi \,\&\&\, \CK(\lstinline?struct?\lstinline?s?) \models \CK(\overline{ \lstinline!T! }\ \overline{\lstinline!m!}) \Rightarrow \CK(\overline{ \lstinline!T! })$. Therefore, the proof follows from hypotheses on $p$.
\end{proof}

We write $\lstinline!e!_0\Subset \lstinline?e?_1$ to mean that the location $\lstinline!e!_0$ is included in $\lstinline?e?_1$. Let $\lstinline?e?_{\lstinline?f?}$ be the body of a function~$\lstinline!f!$.
The presence/absence of the location $\lstinline!e!'$ in  the variant generated form a valid product $p$
depends the annotations of many syntactical elements: (i) the annotation of $\lstinline!Prg(f)!)$ (ii) the annotation of $\CK( \lstinline!e!')$; and, also, (iii)~from
the annotations of expression-lists in which $\lstinline!e!'$ is nested in.
More precisely, we denote  $\mathtt{nested}_{\lstinline?f?}(\lstinline!e!')$ the following set of annotable locations 
$\{ (\tilde{\lstinline?e?})\Subset \lstinline?e?_{\lstinline?f?}   \mid  \lstinline!e!' \Subset (\tilde{\lstinline?e?})\} \cup
\{ \lstinline?g?(\bar{\lstinline?e?})\Subset \lstinline?e?_{\lstinline?f?}   \mid  \lstinline!e!' \Subset \lstinline?g?(\bar{\lstinline?e?})\}$.

\begin{lemma}[Function Generation]\label{lem:funGen}
  Let $(\Phi,\lstinline{Prg},\CK)$ be an SPL such that, both $p$ is a valid product and
  $\lstinline?T?\lstinline?f(T?_1\lstinline?x?_1\ldots\lstinline?T?_n\lstinline?x?_n\lstinline?){return e;}?$ is in $\lstinline{Prg}$.
  It holds that:
   (i)~$p\models \CK(\lstinline?f?)$ iff $\lstinline?f? \in \domof(\lsem\lstinline{Prg}\rsem_p)$;
   (ii)~if  $p\models \CK(\lstinline?f?)$ then:
   $p \models \CK(\lstinline?T?_i\lstinline?x?_i)$ iff  $\lstinline?T?_i\lstinline?x?_i$ is included (not removed) in
   $\llangle \lstinline?T?\lstinline?f(T?_1\lstinline?x?_1\ldots\lstinline?T?_n\lstinline?x?_n\lstinline?){return e;}? \rrangle_p$; and,
 (iii)~if
   $p\models \CK(\lstinline?f?)$ and  $\mathtt{nested}_{\lstinline?f?}(\lstinline!e!')=\{\lstinline!e!_1,\ldots,\lstinline!e!_n\}$ then: 
$p\models (\CK(\lstinline?e?_1)\&\& \ldots \&\&\CK(\lstinline?e?_n))$
   iff
   $\lstinline!e!'$ is included in $\llangle \lstinline?T?\lstinline?f(T?_1\lstinline?x?_1\ldots\lstinline?T?_n\lstinline?x?_n\lstinline?){return e;}? \rrangle_p$.
\end{lemma}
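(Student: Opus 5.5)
The plan mirrors the proof of \Cref{lem:annotStruct}, reading each point off the clauses of the variant generator in \Cref{fig:VG}. For point (i), the program is $\overline{\lstinline!SD!}\;\overline{\lstinline!FD!}$ and its function definitions form an annotable list. Unfolding \textsc{vgList} on $\overline{\lstinline!FD!}$, we show by induction on the length of the list that an element $a_j$ survives in $\lsem a_1\ldots a_n\rsem_p$, as $\llangle a_j\rrangle_p$, exactly when $p\models\CK(a_j)$. Since $\llangle\cdot\rrangle_p$ on a function definition preserves its name (first clause for functions in \Cref{fig:VG}), and names are distinct by assumption, $\lstinline?f?\in\domof(\lsem\lstinline{Prg}\rsem_p)$ iff $p\models\CK(\lstinline?f?)$.

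For point (ii), assume $p\models\CK(\lstinline?f?)$. The definition of \lstinline?f? is then processed by the function clause of $\llangle\cdot\rrangle_p$. That clause applies $\lsem\cdot\rsem_p$ to the formal parameter list and $\llangle\cdot\rrangle_p$ to each element, which is the identity on \lstinline?T x?. The same list-induction fact as in (i) then gives the claim.

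Point (iii) is the substantive one. I would prove a slightly more general statement by structural induction on an expression $\lstinline?e?_0\Subset\lstinline?e?_{\lstinline?f?}$ containing $\lstinline!e!'$. The statement is: $\lstinline!e!'$ occurs in $\llangle\lstinline?e?_0\rrangle_p$ iff $p$ satisfies $\CK(b)$ for every annotable location $b$ strictly between $\lstinline?e?_0$ and $\lstinline!e!'$. Here a location $b$ is an element of an annotated list, either an actual argument of some \lstinline?g?$(\bar{\lstinline?e?})$ or a component of some $(\tilde{\lstinline?e?})$. Instantiating $\lstinline?e?_0:=\lstinline?e?_{\lstinline?f?}$ gives (iii), because the body itself is not annotable and the function clause applies $\llangle\cdot\rrangle_p$ to it directly.

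The induction splits into two groups of cases. For the non-list constructors (member access, assignment, conditional, \uop, \bop, \MFREE), $\llangle\cdot\rrangle_p$ is homomorphic, so the claim passes to the unique immediate subterm containing $\lstinline!e!'$. For \lstinline?g?$(\bar{\lstinline?e?})$ and $(\tilde{\lstinline?e?})$, let $\lstinline?e?_j$ be the element containing $\lstinline!e!'$. Then $\lstinline!e!'$ occurs in the image iff $\lstinline?e?_j$ is kept by \textsc{vgList}, i.e.\ $p\models\CK(\lstinline?e?_j)$, and $\lstinline!e!'$ occurs in $\llangle\lstinline?e?_j\rrangle_p$; the induction hypothesis handles the latter. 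When $\lstinline!e!'$ is itself a list element, its own annotation must also be counted.

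The main obstacle is matching the set $\mathtt{nested}_{\lstinline?f?}(\lstinline!e!')$ as literally defined, which lists the enclosing sequence and call expressions themselves, against the annotations that actually govern deletion, which are those of the enclosing list elements. I would read $\CK$ of an enclosing list or call expression as the annotation of the element of that list lying on the path to $\lstinline!e!'$, together with $\CK(\lstinline!e!')$ when $\lstinline!e!'$ is a list element. I would state this reading explicitly before the induction, so that the equivalence is exactly the conjunction accumulated along the path. Validity of $p$ and well-typedness are not needed here, since the whole argument is purely syntactic about $\lsem\cdot\rsem_p$.
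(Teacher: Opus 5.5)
Your proposal is correct and takes the same route as the paper. The paper handles (i) and (ii) as in \Cref{lem:annotStruct}, by reading off \textsc{vgList}, and proves (iii) by induction along the mutually recursive definition of the variant generator; your generalized path statement makes that induction explicit. Your reinterpretation of $\mathtt{nested}_{\mathtt{f}}(\mathtt{e}')$, as the annotations of the list elements on the path to $\mathtt{e}'$, is actually needed and not merely cosmetic: with the literal set, a body $(\mathtt{x},3)$ with $\CK(\mathtt{x})$ false under $p$ gives $\mathtt{nested}_{\mathtt{f}}(\mathtt{x})=\{(\mathtt{x},3)\}$, whose default annotation is $1$, yet $\mathtt{x}$ is removed.
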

\begin{proof}
  Point (i) and (ii) are similar to that of the \Cref{lem:annotStruct}.
  The point (iii) follows by standard (mutual) induction on the (mutual recursive) definition of variant generator.
\end{proof}

Let $\Delta$ be $\overline{\lstinline?x?}  : \overline{\lstinline?T?} \overset{\text{ with } \CK(\overline{ \lstinline!T! }\ \overline{\lstinline!x!})}{\ }$ then,
we denote $\lsem\Delta\rsem_p=\{ \lstinline?x:T? \mid   \lstinline?x:T?\overset{\text{ with } \CK(\lstinline!T!\lstinline!x!)}{\ }\in\Delta \text{ , } p\models \CK(\lstinline!T!\lstinline!x!)
\}$, namely the pairs \lstinline?x:T? that are used in the product $p$ are available as parameter of the considered function.

\begin{thm}[Safety for annotated expression]\label{thm:safeExpr}
  Let $(\Phi,\lstinline{Prg},\CK)$ be an SPL such that, both $p$ is a valid product and
  let $\mathcal{D}^{\lstinline?Prg?}$ be the derivation proving
  $\phi\deriv \lstinline?Prg?  \; \textsc{ok}$. 
  If $p\models \theta$ and $\theta; \Delta \deriv \lstinline?e?  : \lstinline?T?$ occurs in $\mathcal{D}^{\lstinline?Prg?}$ 
  then $\lsem\Delta\rsem_p\deriv \llangle\lstinline?e?\rrangle_p : \lstinline?T?$.
  Moreover, if \lstinline?struct s? is used in the derivation concluding  $\lsem\Delta\rsem_p\deriv \llangle\lstinline?e?\rrangle_p   : \lstinline?T?$ then $p\models\CK(\lstinline?s?)$.
\end{thm}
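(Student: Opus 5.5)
We argue by induction on the sub-derivation $\mathcal{D}$ of $\theta; \Delta \deriv \lstinline?e?  : \lstinline?T?$ inside $\mathcal{D}^{\lstinline?Prg?}$, case analysis on the last rule \textsc{FT-<name>}, showing that $\llangle\lstinline?e?\rrangle_p$ is typed by the corresponding LC rule \textsc{T-<name>}. The typing target is the variant $\lsem\lstinline{Prg}\rsem_p$. We carry the ``moreover'' clause along in the same induction: each case collects the struct names used in the LC subderivations and checks that $p\models\CK(\lstinline?s?)$ for the new ones. The invariant used throughout is that every premise judgment has a leftmost formula $\theta'$ with $\theta'\models\theta$ (\Cref{lem:annotDerivation}). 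Whenever that formula is exactly $\theta$, or $\theta\,\&\&\,\CK(a)$ with $p\models\CK(a)$, the hypothesis $p\models\theta$ propagates and the induction hypothesis applies.

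\textbf{Base cases.} \textsc{FT-int} and \textsc{FT-null} are immediate.

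For \textsc{FT-par}, the premise $\theta\models\psi$ gives $p\models\CK(\lstinline?T x?)$, so $\lstinline?x:T?\in\lsem\Delta\rsem_p$. Moreover, by \Cref{lem:exprType} we get $p\models\CK(\lstinline?T?)$, which settles the struct clause.

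For \textsc{FT-malloc}, the premise gives $p\models\CK(\lstinline?Prg(s)?)$. By \Cref{lem:annotStruct}(i), $\lstinline?s?\in\domof(\lsem\lstinline{Prg}\rsem_p)$, so \textsc{T-malloc} applies.

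\textbf{Structural cases.} \textsc{FT-cond}, \textsc{FT-uop}, \textsc{FT-bop} and \textsc{FT-mfree} follow from the induction hypothesis, since the variant generator commutes with these constructors. The subtyping and $\maxtype$ side conditions are unchanged. They remain meaningful in the variant because any struct type involved is present: \Cref{lem:exprType} gives $p\models\CK(\lstinline?T?)$, and \Cref{lem:annotSubtyping} covers subtypes.

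For \textsc{FT-member} and \textsc{FT-assign}, the induction hypothesis types $\llangle\lstinline?e?_0\rrangle_p$ with $\lstinline{struct s}\ast$, and the moreover clause gives $p\models\CK(\lstinline?s?)$. Then \Cref{lem:annotStruct}(i) shows the struct survives. The premise $\theta\models\CK(\lstinline?T m?)$ together with \Cref{lem:annotStruct}(ii) shows the member survives, and \Cref{lem:annotStruct}(iii) shows its type is in $\extDom(\lsem\lstinline{Prg}\rsem_p)$.

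\textbf{The two list rules.} These are the main obstacle.

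For \textsc{FT-app}, the premise $\theta\models\CK(\lstinline!Prg!(\lstinline?f?))$ and \Cref{lem:funGen}(i) show that $\lstinline?f?$ is present. By \Cref{lem:funGen}(ii), its surviving formal parameters are exactly those $\lstinline?T?_i\lstinline?x?_i$ with $p\models\CK(\lstinline?T?_i\lstinline?x?_i)$. The premise $\theta\models\CK(\overline{\lstinline?e?})\Leftrightarrow\CK(\overline{\lstinline?T?}\,\overline{\lstinline?x?})$ forces $\lsem\bar{\lstinline?e?}\rsem_p$ to keep exactly the arguments at the same indices, so lengths and positions match. For each kept argument the judgment has leftmost formula $\theta$, so the induction hypothesis yields its type $\lstinline?T?'_i\le\lstinline?T?_i$, and \textsc{T-app} applies.

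For \textsc{FT-seq}, the premise $\theta\models\exists(\tilde{\lstinline?e?})$ ensures that $\lsem\tilde{\lstinline?e?}\rsem_p$ is nonempty, so it is a genuine $\tilde{\lstinline?e?}$. Each kept $\lstinline?e?_i$ has premise formula $\theta\,\&\&\,\CK(\lstinline?e?_i)$, which $p$ satisfies, so the induction hypothesis types $\llangle\lstinline?e?_i\rrangle_p$ with $\lstinline?T?_i$. The delicate point is the type of the last surviving element $\lstinline?e?_k$. If $\lstinline?T?_k\neq\lstinline?T?$, then $\texttt{neverLast}(k,\tilde{\lstinline?e?})$ together with $p\models\CK(\lstinline?e?_k)$ yields some later $\lstinline?e?_j$ with $p\models\CK(\lstinline?e?_j)$, contradicting maximality of $k$. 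Hence $\lstinline?T?_k=\lstinline?T?$, and \textsc{T-seq} gives type $\lstinline?T?$.

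In both list cases the struct clause is the union of the clauses from the induction hypotheses, plus \Cref{lem:exprType} for the result type.
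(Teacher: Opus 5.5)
Your proposal is correct and takes essentially the same route as the paper: induction on the last family-based rule, turning each \textsc{FT}-rule into its \textsc{T}-counterpart, with the same uses of \Cref{lem:funGen}, the $\Leftrightarrow$ premise of \textsc{FT-app}, and the $\exists$ and \texttt{neverLast} premises of \textsc{FT-seq}. You are also more explicit than the paper in places, namely the maximality argument for the last surviving element of a sequence and the use of the ``moreover'' clause to show the struct survives in \textsc{FT-member}.

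One step needs tightening. In \textsc{FT-app}, if $T'_i=\texttt{void}\ast$ and $T_i=\texttt{struct s}\ast$, the variant's subtyping requires \texttt{s} to survive. This follows from the premise $\CK(\overline{T}\,\overline{x})\Rightarrow\CK(\overline{T})$ of the callee's \textsc{FT-fun}, using $p\models\phi$, $p\models\CK(\texttt{Prg}(\texttt{f}))$ and $p\models\CK(T_i\,x_i)$. It does not follow from \Cref{lem:annotSubtyping}, which only transfers presence from a supertype to its subtypes.
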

\begin{proof}
  If $\mathcal{D}$ be the sub-derivation proving  $\theta; \Delta \deriv \lstinline?e?  : \lstinline?T?$ then, the proof follows by induction on the last rule used in $\mathcal{D}$.
  (Indeed, the proof of   $\lsem\Delta\rsem_p\vdash \llangle\lstinline?e?\rrangle_p\vdash   : \lstinline?T?$  can be obtained from $\theta; \Delta \deriv \lstinline?e?  : \lstinline?T?$
  by removing the sub-derivation about the sub-expression which are removed from the VG and, then, erasing the annotating stuff from the remaining derivation.)
\begin{itemize}
\item Cases \textsc{FT-int}, \textsc{FT-null}, \textsc{FT-malloc} can be transformed in typing derivation
  for the variant $\lsem \lstinline?Prg? \rsem_p$, respectively by using rules \textsc{T-int}, \textsc{T-null}, \textsc{T-malloc}.
\item If the last rule is \textsc{FT-par} then, its premises are $\lstinline?x?:\lstinline?T?\overset{\text{ with } \psi}{\;}  \in\Delta$
  and $\theta \models \psi$. It is easy to see that $\Delta$ is predisposed by the (unique) rule \textsc{FT-fun} in the ascendants of   $\mathcal{D}$ in $\mathcal{D}^{\lstinline?Prg?}$
  and, then,  no typing judgment for expressions modify it. The rule \textsc{FT-fun} put in $\Delta$, for all formal parameter of \lstinline?f? a pair variable-type together  its relative annotation.
  So that,  the proof follows because $p\models \theta$.
\item Consider the case \textsc{FT-app} and let \lstinline?f? be the applied function.
  The premise  $\theta \models\CK(\lstinline!Prg!(\lstinline?f?))$ and Lemma~\ref{lem:funGen}.i ensures that
  the definition of \lstinline?f? is included in the variant $\lsem\lstinline?Prg? \rsem_p$,  as required by premises of $\textsc{T-app}$.
  Lemma~\ref{lem:global} ensures that, in the SPL code base \lstinline?Prg?, the number 
  of formal parameter of \lstinline?f? and of arguments supplied in the call is always the same,
  thus the premise $\theta \models\CK( \overline{\lstinline{e}}) \Leftrightarrow \CK(\overline{\lstinline?T?} \; \overline{\lstinline?x?})$
  is applied to lists of the same length. This latter condition and  Lemma~\ref{lem:funGen}.ii ensure that, in the variant $\llangle\lstinline?e?\rrangle_p$,
  the number of (not removed) formal parameters of  \lstinline?f? and the number of (not removed) parameters supplied to it, match exactly.
  This match is required implicitly by \textsc{T-app}. So, we can easily conclude by induction.
\item Case  \textsc{FT-member} follows by induction, by observing that $\theta \models \CK(\lstinline?T m?)$ ensures that \lstinline?T m? is not removed from  $\llangle\lstinline?e?\rrangle_p$,
  as required by \textsc{T-member} (in the typing derivation for $\lsem \lstinline?Prg? \rsem_p$).
\item Case \textsc{FT-assign}  is similar to the previous one, by replacing it with a \textsc{T-assign} rule in the derivation for $\lsem \lstinline?Prg? \rsem_p$.
\item Cases \textsc{FT-cond},   \textsc{FT-uop}, \textsc{FT-bop} and \textsc{FT-mfree} follow by induction.
\item Case \textsc{FT-seq}. The grammar of the SPL ensures that each expression-list contains at least an expression,
  but is the premise $\theta\models  \exists(\lstinline!e!_1,\ldots,\lstinline!e!_n)$ that ensures that in no variant all expressions are removed from the list.
  Moreover, the premise $\text{ $\forall i\in\{1....,n\}$, if } (\lstinline!T!_i \neq \lstinline?T?) \text{ then } (\theta\models \texttt{neverLast}(i,\tilde{\lstinline!e!}))$
  ensure that expressions ending the list expression in a variant, are always typed with the type of last expression in the CB of the SPL.
  Whenever $p\models \theta \&\&\CK(\lstinline!e!_i)$ holds, by induction on the premise $\theta \&\&\CK(\lstinline!e!_i);\Delta \deriv \lstinline!e!_i :\lstinline!T!_i$
  we have that $\lsem\Delta\rsem_p \deriv \llangle\lstinline!e!_i\rrangle_p :\lstinline!T!_i$. Thus we conclude.
\end{itemize}
\end{proof}

\FDaddCR{We can now prove the type safety result.}\\


\begin{thm}[Type safety for CLC]\label{the:CLC-type-safety}
Let $(\Phi,\lstinline{Prg},\CK)$ be an SPL  with $\Phi=(\mathcal{F},\phi)$ such that  $p\models \phi$, viz. $p$ is a valid product.
If $\phi\deriv \lstinline?Prg?  \; \textsc{ok}$ then  $\vdash \lsem \lstinline?Prg? \rsem_p  \; \textsc{ok}$,
 viz the variant identified by $p$ is well-typed.
\end{thm}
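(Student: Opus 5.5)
The proof will invert the last rule of the derivation $\mathcal{D}^{\lstinline?Prg?}$ of $\phi\deriv \lstinline?Prg?\;\textsc{ok}$, which is necessarily \textsc{FT-prg}. From it we build a derivation of $\vdash \lsem \lstinline?Prg? \rsem_p\;\textsc{ok}$ ending with \textsc{T-prg}. Write $\lsem \lstinline?Prg? \rsem_p = \lsem\overline{\lstinline?SD?}\rsem_p\,\lsem\overline{\lstinline?FD?}\rsem_p$. We must establish two things: the sanity of the variant, and $\vdash \lstinline?FD?'\;\textsc{ok}$ for every function definition $\lstinline?FD?'$ occurring in $\lsem\overline{\lstinline?FD?}\rsem_p$. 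Throughout we use that $p\models\phi$.

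For sanity, condition (3) follows from the premise $\CK(\lstinline?main?)=\lstinline{1}$ together with Lemma~\ref{lem:funGen}.i, since $\llangle\cdot\rrangle_p$ keeps the header \lstinline?int main()? unchanged. Main has no parameters, so none can be removed.

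Conditions (1) and (2) require every struct and function name occurring in the variant to be defined there. Struct names occur in three places. In member types we use Theorem~\ref{lem:annotStruct}.iii. In parameter and return types we use premises F1 and F2 of \textsc{FT-fun} instantiated with $\theta=\phi\,\&\&\,\CK(\lstinline?f?)$, which $p$ satisfies; this gives $p\models\CK(\lstinline?T?)$, and hence the struct is present by Theorem~\ref{lem:annotStruct}.i. In bodies, a surviving occurrence inside $\llangle\lstinline?e?\rrangle_p$ is covered by the "moreover" part of Theorem~\ref{thm:safeExpr}. Function names are handled by Theorem~\ref{thm:safeExpr} as well, since its \textsc{FT-app} case already yields presence via Lemma~\ref{lem:funGen}.i.

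The core step is function typing. Take $\lstinline?f?$ with $p\models\CK(\lstinline?f?)$. Then $p\models\theta$ where $\theta=\phi\,\&\&\,\CK(\lstinline?f?)$, and the premise $\theta;\Delta\deriv\lstinline?e?:\lstinline?T?_1$ of \textsc{FT-fun} occurs in $\mathcal{D}^{\lstinline?Prg?}$, with $\Delta=\overline{\lstinline?x?}:\overline{\lstinline?T?}^{\text{with }\CK(\overline{\lstinline?T?}\,\overline{\lstinline?x?})}$. Theorem~\ref{thm:safeExpr} gives $\lsem\Delta\rsem_p\vdash\llangle\lstinline?e?\rrangle_p:\lstinline?T?_1$.

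We then check that $\lsem\Delta\rsem_p$ is exactly the environment built from the retained parameter list $\lsem\overline{\lstinline?T?}\,\overline{\lstinline?x?}\rsem_p$. This holds by Lemma~\ref{lem:funGen}.ii and the definition of \textsc{vgList}, which preserves order. Since $\lstinline?T?_1\le\lstinline?T?_0$ is unchanged by generation, \textsc{T-fun} applies, because $\llangle\cdot\rrangle_p$ fixes the return type and name.

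The delicate point is that the expression typing in the variant is relative to $\lsem\lstinline?Prg?\rsem_p$ rather than $\lstinline?Prg?$. Rules \textsc{T-app}, \textsc{T-member} and \textsc{T-malloc} look up $\lsem\lstinline?Prg?\rsem_p(\lstinline?f?)$ and $\lsem\lstinline?Prg?\rsem_p(\lstinline?s?)(\lstinline?m?)$. So I would make explicit that Theorem~\ref{thm:safeExpr} is meant with respect to the variant program. If that needs shoring up, I would rely on Lemma~\ref{lem:funGen}.ii and Theorem~\ref{lem:annotStruct}.ii: these show that the looked-up definitions in the variant are the generated ones, with exactly the parameters and members whose annotations $p$ satisfies, which is what the \textsc{FT-app} and \textsc{FT-member} cases use. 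Everything else is a direct assembly of the cited results, so I expect this step to be the only real obstacle.
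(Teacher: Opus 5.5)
Your proposal is correct and follows the same route as the paper. You invert \textsc{FT-prg}, then establish sanity of the variant using $\CK(\lstinline?main?)=\lstinline{1}$, Theorem~\ref{lem:annotStruct}, the \textsc{FT-fun} premises and the \textsc{FT-app} premise, and finally obtain each \textsc{T-fun} premise from the corresponding \textsc{FT-fun} premise via Theorem~\ref{thm:safeExpr}. You also make explicit what the paper leaves implicit: that $\lsem\Delta\rsem_p$ matches the retained parameter list, and that Theorem~\ref{thm:safeExpr} must be read relative to $\lsem\lstinline?Prg?\rsem_p$. One small correction: $\lstinline?T?_1\le\lstinline?T?_0$ is not literally unchanged by generation, since $\le$ depends on which structs are defined; it holds in the variant because \textbf{F1} keeps the struct of $\lstinline?T?_0$ present.
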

\begin{proof}
  Let $\mathcal{D}^{\lstinline?Prg?}$ be the derivation proving $\phi\deriv \lstinline?Prg?  \; \textsc{ok}$. This derivation  ends with the rule  \textsc{FT-prg}.
We have to verify that we can build the derivation for the variant $\lsem \lstinline?Prg? \rsem_p$   that ends with the rule \textsc{T-prg}.
  \begin{itemize}
  \item The premise $ \lstinline!Prg! \; \textsc{sane}$ does not immediately imply the sanity of $\llangle \lstinline?Prg? \rrangle_p$,
    because the first is about the CB of the whole SPL, while the second is about a specific variant where some code fragments has been removed.
    Sanity requires 3 conditions. The first requires that for every struct name  \lstinline?s? appearing everywhere in \lstinline!Prg! we have  $\lstinline?s?\in\domof(\overline{\lstinline!SD!})$.
    It follows by Lemmas~\ref{lem:annotSubtyping}, \ref{lem:annotStruct}.iii and \ref{thm:safeExpr}.
    The safety for appearing function follows immediately, because the unique typing rule that consider a defined function \lstinline?f? (out of its definition)
    is \textsc{FT-app},
    that includes  $\theta \models\CK(\lstinline!Prg!(\lstinline?f?))$ in its premises: it ensures that if the application is not removed from a variant
    (identified by $p$)
    then also the definition of  \lstinline?f? is included. The presence of \lstinline?f? is ensured by the premise of \textsc{FT-prg}.
  \item The remaining further premises of \textsc{T-prg} are about the function definitions in $\lsem \lstinline?Prg? \rsem_p$.
    The VG definition ensures that if the definition of \lstinline?f? is included in the variant then $p\models \CK(\lstinline!Prg!(\lstinline?f?))$,
so for each function included in $\lsem \lstinline?Prg? \rsem_p$, it holds $p\models \phi \,\&\&\, \CK(\lstinline{Prg}(\lstinline?f?))$.
    The rule \textsc{FT-prg} includes a premise$\ \phi \,\&\&\, \CK(\lstinline{Prg}(\lstinline?f?)) \deriv \lstinline{Prg}(\lstinline?f?)  \; \textsc{ok}\ $
    for each function  $\lstinline{Prg}(\lstinline?f?)=\lstinline?T?\lstinline?f(T?_1\lstinline?x?_1\ldots\lstinline?T?_n\lstinline?x?_n\lstinline?){return e;}?$
    having a definition in the SPL.
 That  premise is, in turn, is the conclusion of a rule \textsc{FT-fun} that contains sufficient premises to can use the rule \textsc{T-fun} for the included function.
In particular, the Lemma~\ref{thm:safeExpr} ensures that we can build the typing derivation for the part (not removed) from expression in the body of \lstinline?f?.
  \end{itemize}
\end{proof}
 


\section{Conclusion and Future Work}\label{sec:conclusion}
In this paper 
we presented: (1)  the syntax an typing of LC, a core calculus for C programs; (2) the syntax and generation mapping of CLC, a core calculus for SPLs of (L)C; and (3)
a type system for CLC that formalizes  a family-based type checking which guarantees that 
all the variants of a well-typed CLC SPL are well typed (L)C programs.

In future work we plan to provide a small-step operational semantics for LC,  prove its type soundness and to formalize other analyses of C programs and family-based analyises of SPLs of C programs (like, e.g., heap safety).

Moreover, we would  like to extend CLC to a larger subset of C and to equip it with a tool chain for analyses of C programs and family-based analyses of SPLs of C programs. 
Analyzing C code is notoriously difficult.
Therefore, instead of aiming at capturing  the whole C syntax and at providing a tool chain for analyzing existing SPLs of C programs, we aim at providing support for:
\begin{itemize}
\item
developing new SPLs of programs written in a subset of C; and
\item
refactoring existing SPLs of C programs into SPLs written in a subset of C that is supported by the tool chain.
\end{itemize}





\bibliographystyle{eptcs}
\bibliography{mybiblio}


\end{document}